\newtheorem{theorem}{Theorem}[section]
\newtheorem{corollary}{Corollary}[theorem]
\newtheorem{lemma}[theorem]{Lemma}
\newtheorem{claim}[theorem]{Claim}
\newcommand{\remove}[1]{}
\title{On the Complexity of Neural Computation in Superposition}
\author{Micah Adler \\ {\it \small MIT} \\ \tt \footnotesize micah@csail.mit.edu  \and Nir Shavit \\ {\it \small MIT} {\small \&} {\it \small Red Hat}\\ \tt \footnotesize shanir@mit.edu}
\date{}
\begin{document}

\maketitle

\begin{abstract}
Superposition, the ability of neural networks to represent more features than neurons, is increasingly seen as key to the efficiency of large models. This paper investigates the theoretical foundations of computing in superposition, establishing complexity bounds for explicit, provably correct algorithms. 
We present the first lower bounds for a neural network computing in superposition, showing that for a broad class of problems, including permutations and pairwise logical operations, computing \(m'\) features in superposition requires at least \(\Omega(\sqrt{m' \log m'})\) neurons and \(\Omega(m' \log m')\) parameters. This implies an explicit limit on how much one can sparsify or distill a model while preserving its expressibility, and complements empirical scaling laws by implying the first subexponential bound on capacity: a network with \(n\) neurons can compute at most \(O(n^2 / \log n)\) features. Conversely, we provide a nearly tight constructive upper bound: logical operations like pairwise AND can be computed using \(O(\sqrt{m'} \log m')\) neurons and \(O(m' \log^2 m')\) parameters. There is thus an exponential gap between the complexity of computing in superposition (the subject of this work) versus merely representing features, which can require as little as $O(\log m')$ neurons based on the Johnson-Lindenstrauss Lemma. Our work analytically establishes that the number of parameters is a good estimator of the number of features a neural network computes. 
\end{abstract}

\section{Introduction}

While neural networks achieve remarkable empirical success across diverse domains, understanding the computational principles and representations underlying their decision-making processes remains a fundamental challenge. Recent groundbreaking work on this problem of {\em mechanistic interpretability} \cite{bricken2023towards,transformer_circuits,distill_circuits,scaling_monosemanticity_2024} has demonstrated that \textit{features}, functions recognizing specific input properties, form fundamental computational units of neural networks. Features may represent concrete objects, abstract ideas, or intermediate computational results. For instance, \cite{scaling_monosemanticity_2024} identified approximately twelve million human interpretable features in the Claude 3 Sonnet model, including a notably robust ``Golden Gate Bridge'' feature activating across multiple contexts, languages, and modalities (text and images).

The main challenge with extracting these features is that networks typically utilize many more features than available neurons.  When a network uses more features than neurons, it is said to be computing in {\em superposition} \cite{arora2018linear,softmax,transformer_circuits,distill_circuits} as opposed to {\em monosemantic} computation, which has a one-to-one feature to neuron mapping.  This concept was popularized by \cite{transformer_circuits}, which introduced the {\em superposition hypothesis}:  neural network training leads to a representation of features using nearly-orthogonal feature vectors in neuron activation space, which allows the network to represent more features than neurons.  Neuronal activation space vectors are defined by the activation values of the neurons at a given layer of the network on the computation of a specific input. Thus, feature vectors can be seen as an encoding of which features are active. Note that if a neural network is computing in superposition, then it must also be using {\em polysemantic} representations \cite{arora2018linear,chan2024superposition}, where neurons participate in the representation of more than one feature. 

Superposition is important for computational efficiency, as large models appear to employ at least hundreds of millions features and quite likely orders of magnitude more \cite{scaling_monosemanticity_2024}, making monosemantic representations infeasible without a significant increase in the size of the model \cite{transformer_circuits,distill_circuits}.
However, superposition significantly complicates interpretability and explainability, as the actual number of features utilized by models remains poorly understood.  For example, in hypothesizing about how complete the set of twelve million of Claude 3 Sonnet's features was, \cite{scaling_monosemanticity_2024} stated ``We think it's quite likely that we're orders of magnitude short.''  Furthermore, by the Johnson-Lindenstrauss Lemma \cite{johnsonlinden}, the ``nearly orthogonal vector'' representation may allow the number of features to be exponential in the number of neurons.   Central to this and the superposition hypothesis more broadly is the assumption of \textit{feature sparsity}, the empirical observation that only a small subset of features is active during any computation \cite{chan2024superposition,transformer_circuits,olah2023superposition,distill_circuits,scaling_monosemanticity_2024}.

Existing work on feature superposition in neural networks has concentrated on the {\em representation} problem: how are the features encoded within a trained model?  Sparse Autoencoders and related techniques \cite{bricken2023towards,Cunningham2023Sparse,Dunefsky2024Transcoders,transformer_circuits,Gao2024Scaling,Rajamanoharan2024Jumping,scaling_monosemanticity_2024} 
have been shown to learn how specific features are encoded in specific instances of trained networks. In contrast to this focus on representation, our work investigates {\em computation} with superposed features. We address the fundamental question: Given the logic connecting a set of features, how can a neural network implement this logic in superposition? Specifically, we aim to determine the theoretical limits of superposition efficiency by identifying the minimum number of neurons and parameters needed for such computations.

Our work 
abstracts the challenge of "computation with superposed features" into a classic algorithmic framework. This approach provides a two-pronged contribution to the science of mechanistic interpretability. First, our lower bounds establish fundamental information-theoretic limits on how efficiently features can be computed in parallel, addressing the open question of how many features a given number of neurons can process. Second, our algorithmic constructions offer concrete, testable hypotheses about the mechanisms neural networks might use to operate efficiently within these boundaries.  As described below, one of these mechanisms has been subsequently found by Adler et al. \cite{combinatorial} to appear in traditionally trained small neural networks.

To make this problem tractable while retaining its core challenges, we follow \cite{vaintrob_superposition} and focus on the parallel computation of multiple Boolean functions, specifically multiple k-wise ANDs. This abstraction serves three key purposes. First, it distills the immense complexity of general neural computation into a well-defined, analyzable model that captures the essential challenge of managing the interference patterns that arise in superposition. Second, the parallelism in our problem directly mirrors a key, but poorly understood, aspect of neural computation, where numerous features can be processed simultaneously by the network. Finally, this formalization makes the problem amenable to algorithmic and complexity-theoretic tools, allowing us to derive rigorous bounds and explicit constructions. Through this focused approach, we aim to provide foundational insights into the computational principles governing general neural computation.  It is also possible that Boolean logic forms a fundamental building block of more complex neural computations, for example in domains like computer vision that rely on detecting specific combinations of hierarchically organized features.


{\bf Problem Formulation:}
We consider a neural network tasked with computing a collection of Boolean formulas in parallel. Formally, let $F \;=\; \bigl(f_1,\dots,f_{m'}\bigr)$
be a set of \(m'\) Boolean formulas, each defined over \(m\) input variables. Let \(U\subseteq\{0,1\}^m\) be a set of admissible inputs (where each \(u\in U\) is an instantiation of the \(m\) Boolean variables). Our goal is to construct a neural network \(N(F)\) such that, for every \(u \in U\), the network computes \(\bigl(f_1(u),\dots,f_{m'}(u)\bigr)\) (possibly with errors on some \(u\in U\)).

We view each input \(u\) as identifying which features are active as the input to one logical layer of a neural network. The formulas in \(F\) then specify a logical mapping that determines the new set of active features for the subsequent logical layer. In a trained network, this logical mapping might be computed by a single or small number of physical layers.  Multiple such logical layers can then be chained together to implement the overall neural computation.  

A class $\mathcal F$ of problems $F$ we study here (and then generalize), is 2-AND, introduced by \cite{vaintrob_superposition}, consisting of all ordered sets of $m'$ pairwise ANDs of $m$ variables.  For a fixed \(m\) and \(m'\), let \(\mathcal{F}_{m,m'}\) be the class of all $F \;=\; \bigl(f_1,\dots,f_{m'}\bigr)$,
where each \(f_i\) is an AND of two out of the $m$ variables.  Note that 
$\bigl|\mathcal{F}_{m,m'}\bigr| = \left( \genfrac{}{}{0 pt}{}{\left(\genfrac{}{}{0 pt}{}{m}{2}\right)}{m'}\right)m'!$
We refer to 2-AND restricted to a specific $m$ and $m'$ as 2-AND$_{m,m'}$.
We also study the \emph{Neural Permutation} problem, where \(m'=m\) and \(\mathcal{F}\) is the set of all permutations of the identity function on \(m\) Boolean inputs.

Our primary measure of complexity is $n$, the number of neurons in $N(F)$.   We also are interested in the total parameter count of $N(F)$ which is a function of $m$.  A model is said to compute in superposition if $n < m$.
We here consider the scenario where both the input and output of the layer are in superposition, and thus the $m$ inputs need to be represented with $n$ dimensions instead of $m$.  This is fundamentally enabled by the principle of feature sparsity: for any given computation, only a small fraction of the total features are active \cite{transformer_circuits,scaling_monosemanticity_2024}.  This sparsity is a prerequisite for superposition; without it, the nearly-orthogonal vectors that encode features would suffer from excessive interference, making it impossible to reliably represent more features than neurons. Accordingly, feature sparsity is both an empirical reality observed in large models and a foundational assumption in all work on superposition \cite{chan2024superposition,olah2023superposition,distill_circuits,vaintrob_superposition}.

Reflecting this principle, we impose a feature sparsity constraint on the set of admissible inputs $U$. We say $U$ is \emph{feature sparse $v$} if every $u \in U$ has at most $v$ ones. When $v \ll m$, this condition enables a compressed, superposed representation for both the $m$ input variables and the $m'$ outputs.     
We define our models of neural network computation in more detail in Section \ref{section: models}.

\vspace{-0.1 in}

\paragraph{Our Results}
This paper establishes nearly tight bounds on the resources required for computing several functions in superposition: Neural Permutation, 2-AND, and generalizations of 2-AND. We show that using superposition allows $m'$ features to be computed using approximately $\sqrt{m'}$ neurons, but further compression is not possible.

{\bf Lower Bounds:}
Our lower bounds hold for a general computational model (defined below), that encompasses neural networks irrespective of specific architectural details like activation functions or connectivity patterns.   We introduce a general technique within this model, and then use this technique to show that the minimum description of the parameters of the neural network must be at least $\Omega(m' \log m')\text{ bits,}$ for a broad class of problems that includes Neural Permutation, as well as 2-AND$_{m,m'}$.  For neural networks in our upper bound model using a constant number of bits per parameter, this implies $\Omega(\sqrt{m' \log m'})$ neurons are required.

This lower bound applies to both perfectly accurate networks and those that can have some errors. For networks that must always be correct, the proof is a simple counting argument. The analysis becomes slightly more involved for the case where errors are permitted—reflecting what happens in practice. Our proof leverages the network's {\em expressibility}: the diversity of functions computable by varying its parameters. Using Kolmogorov Complexity, we prove that high expressibility necessitates a large parameter description length, even allowing for errors. 
We also show that a similar, slightly weaker, lower bound holds even if the network is only required to compute the correct set of outputs, without regard to their order. In this 'unordered' case, the parameter description length must be at least $\Omega(m' \log \frac{m(m-1)}{2m'})$ bits for the case of no errors, and $\Omega(m')$ bits with errors allowed, provided that $m' \leq {m \choose 2}/2$.   All our lower bounds hold 
for inputs with feature sparsity 2 and are information-theoretic.  They require no structural assumptions about the network, unlike typical VC dimension based lower bounds \cite{Vapnik+Chervonenkis:1971} on neural networks.

Our lower bounds have important implications to the study of mechanistic interpretability. No prior evidence suggested that the number of features must be less than exponential in the number of neurons when using superposition.
Our lower bounds on neurons imply the first subexponential upper bound on the number of features that can be computed.  Specifically, for the problems we consider, a network (or network layer) with $n$ neurons can only compute $O(n^2 / \log n)$ output features.
The lower bounds also contrast sharply with passive representation (encoding active features without computation), where techniques like the Johnson-Lindenstrauss Lemma \cite{johnsonlinden} or Bloom filters \cite{bloom,mitzsurvey} allow $n$ neurons to represent up to $2^{O(n)}$ features. Our results therefore show an exponential gap between the capacity for passive representation and active computation in superposition. 

The lower bounds also have implications for model compression, a topic of great interest given the memory limitations of today's GPUs \cite{hoefler2021sparsity}.  Compression techniques used in practice include \textit{quantization} \cite{hoefler2021sparsity}, \textit{sparsity} \cite{LotteryHypothesis, hoefler2021sparsity}, and \textit{knowledge distillation} \cite{distilling}.  Our lower bound establishes fundamental limits on the degree of compression these techniques can achieve without sacrificing computational accuracy.


{\bf Upper Bounds:} We provide an explicit neural network construction for 2-AND$_{m,m'}$ and Neural Permutation using only $n = O(\sqrt{m'} \log m')$ neurons; given our lower bounds, this is within a $\sqrt{\log m'}$ factor of optimal.  The network uses $O(m' \log^2 m')$ parameters, with an average description length of $O(1)$ bits each, matching the parameter description length lower bound within a $\log m'$ factor.  This network computes the function exactly (error-free), uses only $O(1)$ layers, and instances can be chained for sequential computation (e.g., series of 2-AND operations). The construction assumes $O(1)$ input feature sparsity (an assumption also compatible with our lower bounds).  To the best of our knowledge, this is the first provably correct algorithm for computing a non-trivial function wholly in superposition.

We also introduce {\em feature influence}, hinted at in \cite{chan2024superposition} and defined below, which measures how many output features an input affects. Feature influence has a significant impact on what techniques are effective in computing in superposition, and for some functions, determines the ability to compute them in superposition at all. Accounting for varying feature influence makes our 2-AND construction fairly involved. Our algorithm partitions the pairwise AND operations based on influence and applies one of three distinct techniques accordingly. 

One of these techniques addresses the low-influence case (inputs all have feature influence $\le m'^{\,1/4}$), which reflects most real-world scenarios.  This technique routes inputs to dedicated, superposed ``computational channels'' associated with specific outputs, and then uses these channels for computation. This seems foundational to superposition and may be of general interest.  In fact, subsequent work by Adler et al. \cite{combinatorial} demonstrates that this technique emerges naturally in toy networks trained via standard gradient descent. We thus believe this theoretical construct and its analysis capture a phenomenon that may appear in real networks and thus can inform mechanistic interpretability efforts aimed at understanding learned network behaviors.

Finally, we demonstrate generalizations: the $O(1)$ feature sparsity assumption can be extended to any sparsity $v$ with an additional complexity factor of $O(v2^v\log m)$; the algorithms can be utilized in multi-layered networks; and they can be modified to handle $k$-way AND functions.  These extensions appear in Appendix \ref{extensions}.

\subsection{Related Work}

Our research builds on the groundbreaking work of Vaintrob, Mendel, and H{\"a}nni \cite{vaintrob_superposition}, which introduced the algorithmic problem of computing in superposition via a single-layer network for the $k$-AND problem. However, their approach has key limitations that we address. Firstly, their model places only the neurons in superposition, and represents inputs and outputs monosemantically, thereby simplifying the problem and avoiding some of its main challenges (and in fact they show an algorithm for their case that outperforms our lower bound). Our framework requires inputs, neurons, and outputs to be in superposition, which more accurately represents the logical layer of a real network. Secondly, their technique is confined to single-layer networks due to error accumulation, whereas our method eliminates this error, enabling arbitrary network depth. Thirdly, unlike \cite{vaintrob_superposition}, we establish for the first time lower bounds on the complexity of computing in superposition.

Subsequent work by Vaintrob, Mendel, H{\"a}nni and Chan \cite{hanni2024computation}, concurrent with ours, extends their results to compute 2-AND with inputs in superposition using polynomially many layers. However, their result yields the more readily attainable $n = \Theta(m'^{\,2/3})$, plus an unspecified number of log factors, instead of the almost tight $n = O(\sqrt{m'} \log m')$ we achieve. However, they demonstrate an important and elegant result not addressed here: randomly initialized neural networks are likely to emulate their construction, suggesting that constructions like theirs and ours may occur ``in the wild.''  This work also does not address lower bounds.

Another paper that studies the impact of superposition on neural network computation is \cite{polysemanticcapacity}.  However, they look at a very different question from us: the problem of allocating the capacity afforded by superposition to each feature in order to minimize a loss function, which becomes a constrained optimization problem.  

There is a close connection between computation in neural networks and the study of Boolean circuits, particularly circuits with threshold gates \cite{Maass97,Parberry94,SiuRK95}, and improvements to our upper bound results would have implications to that field of study.  Specifically, as pointed out in \cite{williamsPC}, and building on work in \cite{williams2018limits,williams2024orthogonal}, computing 2-AND in a single layer with even a small amount of superposition ($n = o(m' / \log m')$ would suffice) without a feature sparsity assumption would refute the Orthogonal Vectors Conjecture (OVC). 

We also contrast our work with the well-studied {\em network memorization} problem \cite{park2021provable,vardi2022optimal,yun2019small} (also known as {\em finite sample expressivity}), where a fixed network $N$ uses parameters $P$ to store $Q$ arbitrary input-output pairs $(x_1, y_1), \ldots,$ $(x_Q, y_Q)$: $\forall i, N(P)(x_i) = y_i$.   With some assumptions, this requires $\tilde{\Theta}(\sqrt{Q})$ parameters and neurons, and $\tilde{\Theta}(Q)$ bits of parameter description \cite{vardi2022optimal}.  With other assumptions, the known asymptotics are worse: see the related work section in \cite{yun2019small} and \cite{vardi2022optimal} for pointers.  Although memorization could represent 2-AND by explicitly providing all input-output pairs, this approach is inefficient. Even assuming feature sparsity $v$ requires $Q \geq {m \choose v}$, which leads to an upper bound for computing in superposition of $\tilde{O}(m^{v/2})$ on neurons and depth, and $\tilde{O}(m^{v})$ bits of parameter description.  This is substantially worse than our result even for the minimal $v=2$.  Furthermore, 2-AND is not able to represent an arbitrary input-output relationship and so memorization lower bounds do not directly apply to 2-AND. Existing memorization lower bounds also rely on VC dimension techniques, requiring assumptions on activation functions and network structure, unlike our information-theoretic lower bounds which are assumption-free in this regard.

We also mention that there is a large body of work (e.g., \cite{lower,hertrich2021towards,vardi2020neural}) on lower bounds for the depth, weight size and computational complexity of specific network constructions such as ones with a single hidden layer, or the number of additional neurons needed if one reduces the number of layers of a ReLU neural network \cite{arora2018}. In contrast, our work here relates the amount of superposition and the number of parameters in the neural network to the underlying features it computes, a recent discovery in mechanistic interpretability research. 

Our work is also inspired by various papers from the research team at Anthropic \cite{chan2024superposition,transformer_circuits,distill_circuits}. Their work influenced our general modeling approach and also inspired our definitions of feature sparsity and feature influence.  Their recent work \cite{ameisen2025circuit} starts to examine computation as well, using learning techniques to extract the dependencies between features that arise during the computation of trained models on specific 
inputs.

\section{Modeling Neural Computation}
\label{section: models}

We use two models of computation for the neural network $N(F)$.  For lower bounds we work in a general model of {\em parameter driven} algorithms, which subsumes neural networks (and more).  For upper bounds we instantiate this model with a standard neural network architecture.

\vspace{-0.1 in}

\subsection{Lower Bound: Parameter Driven Algorithms}

In many applications, one fixed network architecture is used across a wide range of tasks, with the choice of task determined only by its parameters.  We capture this by the following model.  Let $U$ and $V$ be finite sets, and let $\mathcal{F}\subseteq \{\,F : U \to V\}$ be a class of functions.  We say $T$ is a parameter driven algorithm for $\mathcal{F}$ if there exists a parameterization function $P : \mathcal{F} \to \{0,1\}^*$ such that
\[
  \forall F\in\mathcal{F},\;\forall u\in U,\quad 
  T\bigl(P(F),\,u\bigr)\;=\;F(u).
\]
Thus $T$ is a single universal architecture that, given parameters $p=P(F)$ and an input $u\in U$, outputs $F(u)$ for any $F\in\mathcal{F}$.  In Section~\ref{lower}, we generalize this definition to allow $T$ to err on a fraction of inputs.

Because we use this model to prove lower bounds on parameter length, we impose no constraints on how $T$ computes: $T$ may be any function (not necessarily computable) from $\{0,1\}^*\times U$ to $V$.  But if $\mathcal{F}$ is large, then the parameter string must be long---both in the always-correct setting (straightforward) and in the setting where $T$ may make mistakes on some inputs $u\in U$ (requiring a more careful argument).  Consequently, our lower bounds apply to neural networks with arbitrary structure and activation functions, and even to non-neural parametrized models.

After establishing parameter lower bounds, we specialize to a network structure using the same kind of square $n\times n$ matrices as in our upper bound model below.  For this structure, any lower bound of $B$ parameters implies a lower bound of $\Omega(\sqrt{B})$ neurons.

Parameter driven algorithms are important due to a shift in software usage.  Traditional software is often used for databases and analysis, where algorithm descriptions are typically small relative to input size and complexity is dominated by the input.  Deep learning changes this balance: the description size (parameters) can be large relative to the input, and complexity is often driven by this size.

\vspace{-0.1 in}

\subsection{Upper bound: Multi-layer perceptrons}

For our upper bounds, we restrict to parameter driven algorithms computed by a \emph{multi-layer perceptron} (MLP) of depth \(d\) and fixed width \(n\).  Let $\mathbf{x}\in \{0,1\}^n$ be the input.  Each layer \(L_i\) applies an affine map followed by a Rectified Linear Unit (ReLU):
\[
L_i(\mathbf{z}) \;=\; \mathrm{ReLU}\bigl(A_i \mathbf{z} + \mathbf{b}_i\bigr),
\]
where $A_i\in \mathbb{R}^{n\times n}$ and $\mathbf{b}_i\in \mathbb{R}^n$ are the parameters of layer \(i\).  We view each coordinate as one neuron; the \(j\)th neuron in layer \(i\) outputs
\[
\max\Bigl\{0,\;\sum_{k=1}^n [A_i]_{j,k}\,z_k \;+\; [\mathbf{b}_i]_j \Bigr\}.
\]
A depth-\(d\) MLP with layers \(L_1,\ldots,L_d\) computes
\[
N(\mathbf{x}) \;=\; L_d\bigl( L_{d-1}(\cdots L_1(\mathbf{x})\cdots)\bigr).
\]
The network parameters are \(\{A_i,\mathbf{b}_i\}\) for \(1\le i\le d\).  We allow \(d\) to be arbitrary, though typically \(d\ll n\) in practice.

Note that $n$ (the input dimension) need not equal $m$ (the number of Boolean input variables) or $m'$ (the number of Boolean formulas being computed).  Instead, the $n$-vector input encodes the $m$ variables under feature sparsity $v$.  We say a \emph{layer} computes in superposition if \(n < m'\), and an MLP computes in superposition if every layer does.  If the input is not initially presented in superposition, we can prepend a transformation that does so.

We omit other common operations (e.g.\ batch normalization, pooling, or the quadratic activations of~\cite{vaintrob_superposition}) and focus on ReLU.  Since our lower bounds hold for general activations and our upper bounds nearly match them, such modifications cannot yield much benefit for the problems studied here.  It remains open whether relaxing the ReLU restriction can yield significant asymptotic improvements for other problems.

As noted above, feature influence strongly affects network design.  Let \(F\colon \{0,1\}^m \to \{0,1\}^{m'}\) have output features \(f_1,\dots,f_{m'}\).  For an input variable \(x_i\), define its \emph{feature influence} as the number of output features \(f_j\) for which there exists a partial assignment \(\mathbf{s}\in \{0,1\}^{m-1}\) to the other \((m-1)\) bits such that flipping \(x_i\) (with \(\mathbf{s}\) fixed) changes \(f_j\).  Formally,
$
\mathrm{Infl}(x_i) = \bigl|\bigl\{j \, \exists \mathbf{s}\in \{0,1\}^{m-1} \text{ such that } f_j(\mathbf{s},x_i=0)\neq f_j(\mathbf{s},x_i=1)\bigr\}\bigr|.
$
The \emph{maximum}, \emph{average}, and \emph{minimum} feature influences of \(F\) are the maximum, average, and minimum of \(\mathrm{Infl}(x_i)\) over all input variables \(x_i\).  Our upper bound algorithms for 2-AND and related problems apply for all feature-influence regimes, though the influence pattern materially affects which techniques the algorithms use.

We do not analyze the complexity of translating $F$ into $N(F)$.  However, all algorithms we provide run in time polynomial in $m$, and are likely far more efficient than using traditional training to construct $N(F)$.

\section{Upper Bounds}
\label{sec:upper-bounds}

We give an explicit construction that computes any instance of the Neural Permutation problem and any instance of 2-AND$_{m,m'}$ \emph{in superposition} using only $n = O(\sqrt{m'}\log m')$ neurons and a constant number of $n \times n$ layers. The total number of parameters is $O(m'\log^2 m')$. In this Section and in Appendix~\ref{upper_full}, we assume inputs have at most two active (True) Boolean variables; we show how to generalize this to $v$ active inputs (with a factor of $v2^v\log m)$ increase in $n$) in Appendix~\ref{moreactive}. An event holds with high probability (w.h.p.) if it occurs with probability at least $1 - m^{-\alpha}$ for an arbitrary constant $\alpha>0$ (by adjusting constant factors).

\vspace{-0.1 in}

\paragraph{Construction vs.\ inference.}
Our networks are assembled by constructing and then multiplying together larger rectangular matrices during a one-time \emph{construction} phase to produce the $n \times n$ matrices actually used at \emph{inference} time. This lets us construct analyze clean, problem-structured \(m\)- or \(m'\)-dimensional operators while ensuring standard MLP inference cost.

\vspace{-0.2em}
\subsection{Input encoding via random compression}
\label{subsec:encoding}
Let $\mathbf{y} \in \{0,1\}^{m}$ denote the monosemantic input. We choose a \emph{compression} matrix $C \in \{0,1\}^{n\times m}$ with i.i.d.\ Bernoulli($p$) entries where $n = \Theta(\sqrt{m}\log m)$ and $p = \Theta(\log m / n)$. The superposed input is $\mathbf{x} = C\mathbf{y} \in \mathbb{Z}_{\ge 0}^n$.

To argue that no information is lost, define the \emph{decompression} matrix $D \in \mathbb{R}^{m\times n}$ by normalizing each row of $C^\top$: if row $i$ has $r_i$ ones then $D(i,j)=1/r_i$ when $C^\top(i,j)=1$ and $0$ otherwise. Let $R := DC$. Then $R(i,i)=1$, and standard Chernoff bounds show that off-diagonals satisfy $R(i,j)=O(1/\log m)$ w.h.p. Hence $\hat{\mathbf{y}} := D\mathbf{x} = R\mathbf{y}$ equals $\mathbf{y}$ on active coordinates and remains $O(1/\log m)$ elsewhere, so thresholding $\hat{\mathbf{y}}$ at $1/2$ recovers $\mathbf{y}$ exactly. This compression/decompression pattern is central to all our constructions. 

\vspace{-0.2em}
\subsection{Neural Permutation in superposition}
\label{subsec:perm}
Let $P\in\{0,1\}^{m\times m}$ be a permutation matrix and $\mathbf{x}=C\mathbf{y}$ the compressed input. We want $\mathbf{x}'=C(P\mathbf{y})$ using only $n$-dimensional operators. Consider
$
T := CPD \in \mathbb{R}^{n \times n},$
$T\mathbf{x} = CPD(C\mathbf{y}) = C\bigl(P\mathbf{y}\bigr) + C\bigl(P\boldsymbol{\epsilon}\bigr),
$
where $\boldsymbol{\epsilon} := (DC-I_m)\mathbf{y}$. As we will see below in our 2-AND analysis, each coordinate of $C(P\boldsymbol{\epsilon})$ is $O(m\log^2 m/n^2)$ w.h.p.; choosing $n=\Theta(\sqrt{m}\log m)$ makes this noise $<\tfrac14$ per coordinate. A constant number of bias+ReLU steps collapse sub-$\tfrac12$ values to $0$ and values near $1$ to $1$, yielding $\mathbf{x}'$ exactly. Thus the permutation can be computed in superposition with $n=O(\sqrt{m}\log m)$. The derivation is similar to the proof of Theorem~\ref{first-upper}, in Appendix~\ref{low-influence}.
\emph{Tightness.} Our lower bounds imply $n=o(\sqrt{m\log m})$ is impossible for Neural Permutation; the residual noise term above is precisely the obstruction.

\vspace{-0.2em}
\subsection{A warm-up special case: single-use 2-AND}
\label{subsec:single-use}
We first solve a special case of 2-AND where each input bit appears in at most one output AND gate (“maximum feature influence 1”), hence $m'\le m/2$. Let $x_0=C y_0$ be the compressed input. Using only linear maps in $\mathbb{R}^n$ and coordinatewise ReLUs, the computation is
\[
x_1 \;=\; \mathrm{ReLU}\!\left(C_1' D_1\,\mathrm{ReLU}(C_0 D_0 x_0 + b)\right)
\]
with the following pieces:
\begin{itemize}[leftmargin=1.2em,itemsep=0.2em,topsep=0.2em]
\item $D_0$ is the decompressor for $x_0$; $\hat y_0=D_0x_0\approx y_0$ exposes input bits.
\item \emph{Output channels.} Choose $m'$ i.i.d.\ column specifications $s_1,\ldots,s_{m'}\in\{0,1\}^n$ (Bernoulli($p$) entrywise with $p=\Theta(\log m/n)$). For each output $f_i=y_0(j_i)\wedge y_0(k_i)$, set columns $j_i$ and $k_i$ of $C_0$ equal to $s_i$. Unused inputs get zero columns. Let $b=-\mathbf{1}$.
\item $D_1$ “averages over” the support of $s_i$: $D_1(i,j)=1/|s_i|_1$ if $s_i(j)=1$, else $0$.
\item $C_1'$ is a fresh Bernoulli($p$) compressor in $\{0,1\}^{n\times m'}$.
\end{itemize}
Intuition (Fig.~\ref{fig:construction}, top): $C_0\hat y_0+b$ maps each intended AND pair to \(\{2,1,0\}-1\in\{1,0,-1\}\) along the coordinates where $s_i=1$; the first ReLU keeps only the $1$ pattern when \(\wedge\) is true. Overlaps between distinct $s_i$ introduce weak activations; $D_1$ spreads and attenuates this noise, and the final $C_1'$ plus thresholding removes it. During construction we explicitly form the $n\times n$ products $C_0D_0$ and $C_1'D_1$; these are the only matrices used at inference (Fig.~\ref{fig:compressed-inference}, bottom). A formal statement and proof for the general low-influence regime appear below, which subsumes this case.

\begin{figure}[ht]
\centering
\includegraphics[width=0.85\linewidth]{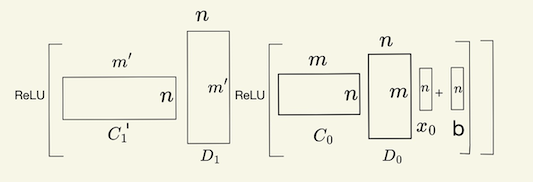}
\includegraphics[width=0.48\linewidth]{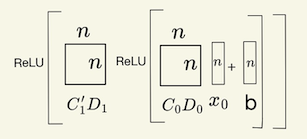}
\vspace{-0.6em}
\caption{Top: single-use construction via output channels. Bottom: products precomputed for inference.}
\label{fig:construction}\label{fig:compressed-inference}
\end{figure}

\paragraph{Input vs.\ output channels.}
In our setup, the column patterning of $C_0$ ties each output $f_i$ to a dedicated code $s_i$, yielding \emph{output channels}: the two inputs feeding $f_i$ are routed into the same $n$-dimensional subspace so that AND reduces to local addition plus thresholding.
Effectively, this establishes a dedicated \emph{computational channel} for each output.  This technique appears foundational due to its simplicity and potential applicability to more general functions, raising the question of whether similar mechanisms emerge in conventionally trained neural networks. Subsequent work \cite{combinatorial} answers this question in the affirmative: this same technique emerges naturally in networks trained via standard gradient descent. We thus believe this technique and its analysis may be of interest to the study of mechanistic interpretability.

Output channels work best when most inputs have low feature influence, since overlaps between different $s_i$ stay small. By contrast, \cite{vaintrob_superposition} uses {\em input channels}: each input receives a reusable random code whose overlaps support many pairings—ideal for inputs with high feature influence but with more background activity, so the number of such inputs must be $\ll m'$. Using output channels for heavy inputs risks interference; using input channels for light inputs wastes capacity. 

\subsection{High level outline of main algorithm}
\label{preliminaries}

Our goal is to demonstrate that $n = O(\sqrt{m'}\log m')$ neurons are sufficient for any 2-AND problem.  We divide the problem up into three subproblems, dependent on feature influence, and these subproblems will be solved using the three algorithms described here below, and in Appendices \ref{high-influence}, and \ref{mixed-influence}, respectively.  In all cases, we use the same structure of matrices as described above, and depicted in Figure \ref{fig:construction}.  We call this structure the {\em common structure}. 
Unless otherwise specified, each of the algorithms only changes the specific way that matrices $C_0$ and $D_1$ are defined.  Here is a high level description of our algorithm:
\begin{itemize}
    \item  Label each input {\em light} or {\em heavy} depending on how many outputs it appears in, where light inputs appear in at most $m'^{\,1/4}$ outputs and heavy inputs appear in more than $m'^{\,1/4}$ outputs.
    \item Label each output as {\em double light}, {\em double heavy} or {\em mixed}, dependent on how many light and heavy inputs that output combines.
    \item Partition the outputs of the 2-AND problem into three subproblems, based on their output labels.  Each input is routed to the subproblems it is used in, and thus may appear in one or two subproblems.  Otherwise, the subproblems are solved independently.
    \item Solve the double light outputs subproblem using algorithm {\bf Low-Influence-AND} (App.~\ref{low-influence}). 
    \item Solve the double heavy outputs subproblem using algorithm {\bf High-Influence-AND} (App.~\ref{high-influence}).
    \item Solve the mixed outputs subproblem using algorithm {\bf Mixed-Influence-AND} (App.~\ref{mixed-influence}).  This algorithm actually requires a further division into two independent subproblems.
\end{itemize} 

To route inputs to the correct subproblems, we use the matrix $C_1'$ of the previous layer, or if this is the first layer, we can insert a preliminary decompress-compress pair prior to $x_0$, followed by a thresholding operation to remove any resulting noise before starting the algorithm above.  The partition of the outputs and the computation allocates unique rows and columns to each of the subproblems in every matrix of the computation except $C_1'$ (since that is used to set up the partition for the input to the next layer).  As a result, the subproblems do not interfere with each other, and we can treat each subproblem independently.  This is depicted in Figure \ref{partition} for the case of two subproblems.

\begin{figure}
\centering
\includegraphics[width=0.85\linewidth]{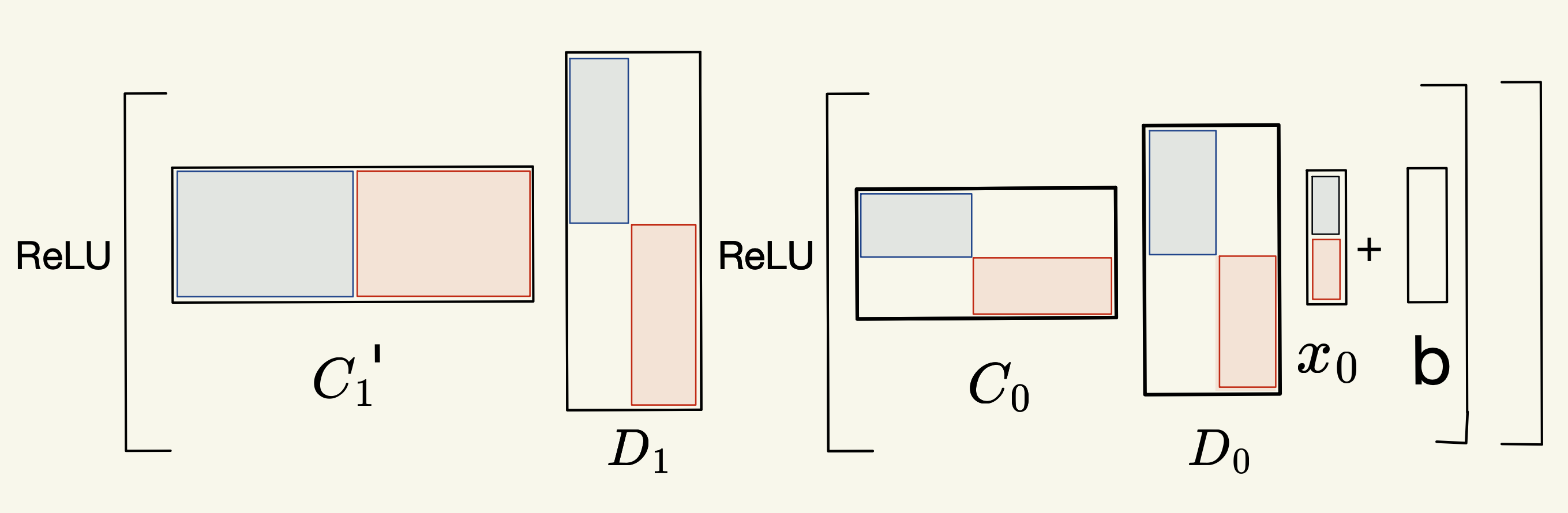}
\caption{A partition of 2-AND into two subproblems.  The red regions compute one subproblem, and the blue regions the other.  All entries in other regions will be 0.  Note that in $C_1'$, the rows do overlap.  This is to set up the outputs of this layer as the inputs to next layer, specifically to allow the same resulting input to appear in up to two subproblems.}
\label{partition}
\end{figure}

In App~\ref{low-influence}, we provide a description and proof of correctnesss for Algorithm \textbf{Low-Influence-AND}. Algorithm {\bf High-Influence-AND} (App.~\ref{high-influence}) is similar to this, but with its own nuance.  The mixed input case is particularly challenging (App.~\ref{mixed-influence}), and requires us to further subdivide the problem into two parts, depending on an even higher threshold of the feature influence of the heavy inputs.  It also requires a dedicated {\bf detect-two-active-heavies} gate to prevent catastrophic interference.

We prove in App.~\ref{upper_full} that $n = O(\sqrt{m'}\log m')$ neurons are sufficient for each of the subproblems.  Since we only have 3 subproblems, that bound also applies to the overall problem . The challenge in these proofs is bounding the noise incurred as a result of the superposition inherent to the problem.  In fact, controlling this noise is the main reason for the additional decompress / compress layer that is added after the output channel encoding our construction performs.  Perhaps surprisingly, if this is done the right way, it controls noise instead of adding to it (App.~\ref{low-influence} for details).  Our entire construction only requires $O(1)$ layers.  

We also demonstrate a number of extensions and further analysis of this protocol in Appendix~\ref{extensions}.  In App.~\ref{constant_bits}, we demonstrate how to ensure the algorithm can be constructed using an average of $O(1)$ bits per parameter.  In App.~\ref{moreactive}, we demonstrate how to extend this algorithm to $v$ active inputs, for any $v \leq m$, but the resulting $n$ has an exponential dependence on $v$.  And in App.~\ref{kway} we discuss how to extend this protocol to computing $k$-wise ANDs.

\vspace{- 0.1 in}

\section{Lower Bounds}

\label{lower}

We here describe our lower bounds on neurons and parameters for parameter driven algorithms. All of our proofs appear in the Appendix, which also restates the theorems and corollaries stated here.  Our first theorem is for the case of parameter driven algorithms that do not make errors.  For this case, the result can be shown from a simple counting argument, although in the Appendix we provide a more formal proof of this as a setup for the proof for algorithms that are allowed to make errors.  Both our upper and lower bound models assume that a problem instance defines the order of the outputs; this is central to our lower bound proofs (but not required for the upper bound).  We also demonstrate, in App. \ref{unordered}, that the techniques of this section can be extended to the unordered case as well.

\begin{theorem}
Let \(U\) and \(V\) be finite sets, and let \(\mathcal{F}\subseteq\{\,F : U \to V\}\) be a set of distinct functions.  Suppose \(T\) is a parameter driven algorithm for \(\mathcal{F}\), with parameter function \(P(F)\) mapping each \(F\in\mathcal{F}\) to a bit string.  If 
$T\bigl(P(F),u\bigr) = F(u)$ for all $F\in\mathcal{F}$ and all $u\in U$
then for almost all \(F \in \mathcal{F}\), we have $\bigl|P(F)\bigr|\ge\log_2 \bigl|\mathcal{F}\bigr|.$
\end{theorem}


We now extend 
this
to allow a parameter driven algorithm to make mistakes on some inputs, as tyically happens in real neural networks.  We consider two types of errors:

    
{\bf Probabilistic errors:} The algorithm’s execution can include random sampling (e.g., randomized choices in the neural network), so that for each input \(x\), the output may be incorrect with some probability $< \frac12$.  In this case, we can sample the output of $(T\bigl(P(F),x\bigr)$ multiple times per input.  This yields the same lower bound as the error free case, and so we do not consider this scenario further.

  
{\bf Systematic errors on a subset of inputs:} Instead, there may be a subset of the possible inputs $u$ on which \(T\bigl(P(F),u\bigr)\) permanently disagrees with \(F\).  Specifically, for any \(F\) in a family \(\mathcal{F}\), and some \(\epsilon<0.5\), \(T\) is correct for at least a \(\bigl(1-\epsilon\bigr)\)-fraction of \(u\in U\) (but possibly wrong on the rest).  We say \(T\) \emph{\(\epsilon\)-correctly computes} \(\mathcal{F}\) if for every \(F\in\mathcal{F}\), there is a subset \(U_F\subseteq U\) with \(\tfrac{\lvert U_F\rvert}{\lvert U\rvert}\ge (1-\epsilon)\) such that
   $
   T\bigl(P(F),u\bigr)\;=\; F(u)
   \quad\text{for all}\quad u\in U_F.
   $


We cannot hope for as strong a lower bound with these kinds of errors as for the error-free case.  Consider for example, a class of functions \(\mathcal{F}\) that only differ on a single  input \(\bar{u}\in U\): $\forall F_1, F_2 \in \mathcal{F}$, $\forall u \in U - \{\bar{u}\}$, $F_1(u) = F_2(u)$.  In this case $T$ can always return
the same (incorrect) value on \(\bar{u}\) and return the correct value on all other inputs. This $T$ requires no parameters, despite always being correct except for a single input.

Instead, we focus on a subset of functions in $\mathcal{F}$ that can always be distinguished from each other. Specifically, we say that $\mathcal{F}' \subseteq \mathcal{F}$ is $\beta$-robust if for all $F_1, F_2 \in \mathcal{F}'$ with $F_1 \neq F_2$, there exists $U'_{F_1F_2} \subseteq U$ such that $\frac{|U'_{F_1F_2}|}{|U|} > \beta$ and $\forall u \in U'_{F_1F_2}$, $F_1(u) \neq F_2(u)$.  In other words, $F_1$ and $F_2$ map strictly more than a fraction of $\beta$ of the inputs to different outputs.  We use a $\beta$-robust $\mathcal{F}'$ as an error correcting code with Hamming distance $\beta|U|$, where every $F \in \mathcal{F}'$ is a codeword with every $u \in U$ providing one symbol $F(u)$ for that codeword.

\begin{theorem}
\label{main-lower-short}
Let \(\epsilon<0.5\), and suppose \(\mathcal{F}\subseteq\{\,F:U\to V\}\) contains a \emph{non-empty} \(\beta\)-robust subset \(\mathcal{F}'\subseteq \mathcal{F}\) with \(\beta\ge 2\epsilon\).  Let \(T\) be any parameter driven algorithm that \(\epsilon\)-correctly computes \(\mathcal{F}\).  For each \(F\in\mathcal{F}\), let \(P(F)\) be its parameter description.  
For almost all $F \in \mathcal{F}'$, $|P(F)| \geq \log |\mathcal{F}'|$.
\end{theorem}

\begin{corollary}
Let \(\epsilon<0.5\), and suppose \(T\) \(\epsilon\)-correctly computes the Neural Permuation problem.  Any such $T$ requires a parameter description of length at least 
$\log [((1-2\epsilon)|U|)!] = \Omega(|U| \log |U|)$.
\end{corollary}

\begin{corollary}
For any $m$, $m'$ with $m' \leq {m \choose 2}$, let $T$ be any parameter driven algorithm that computes $2$-AND$_{m,m'}$ $\epsilon$-correctly.  $T$ requires a parameter description length of at least $\Omega(m' \log m')$.
\end{corollary}

Note that we have made no assumptions here about whether the inputs and/or outputs are stored in superposition, and so this bound applies in all four combinations of superposition or not. Also, note that we can assume that $m' \geq \frac{m}{2}$ since if $m'$ is smaller than that, then we can remove any unused input entries from the problem, thereby reducing $m$.   Finally, we again point out that for any neural network in our upper bound model (and thus using square matrices) and a constant number of bits per parameter, these lower bounds imply that the number of neurons required is $\Omega\left(\sqrt{m' \log m'}\right)$.  In the Appendix, we also provide a possible way to extend our lower bound techniques to large language models.

\section{Limitations}

To achieve clean, provable bounds for superposed computation, we employ simplifying abstractions that isolate the core interference challenges of superposed computation but limit the direct scope of our conclusions. We focus on structured Boolean functions and strong sparsity regimes; while this yields near-tight $\tilde{O}(\sqrt{m'})$ bounds for these foundational primitives, it does not fully capture the continuous nature of modern models and scales exponentially with the number of active features. Furthermore, our sharpest lower bounds rely on ordered outputs, meaning our impossibility results may overstate the constraints for unordered, permutation-invariant tasks where our current bounds remain looser.

\bibliographystyle{plain}
\bibliography{bibliography}

\begin{appendix}

\vspace{0.3 in}

{\huge \bf Appendix}

\section{Full Proofs of Lower Bounds}
\label{lower-full}

We here provide a full version of our lower bound section.  Note that we repeat all of the material from Section \ref{lower}, but also provide proof details.  In Section \ref{llm} we also provide a discussion on how it might be possible to extend these results to large language models.
In Section \ref{unordered}, we show how to extend our results to the case where the ordering of outputs is left unspecified.
We start by assuming that the parameter driven algorithm does not make any errors but will add errors to the mix later below.  For the error-free case, we are much more formal than is necessary; we do so in order to set up a framework that makes it much easier to demonstrate the lower bound for when the parameter driven algorithm can make mistakes.

\begin{theorem}
Let \(U\) and \(V\) be finite sets, and let \(\mathcal{F}\subseteq\{\,F : U \to V\}\) be a set of distinct functions.  Suppose \(T\) is a parameter driven algorithm for \(\mathcal{F}\), with parameter function \(P(F)\) mapping each \(F\in\mathcal{F}\) to a bit string.  If 
\[
T\bigl(P(F),u\bigr) \;=\; F(u)
\quad
\text{for \emph{all} }F\in\mathcal{F}\text{ and \emph{all} }u\in U,
\]
then for \emph{almost all} \(F \in \mathcal{F}\), we have $\bigl|P(F)\bigr|\ge\log_2 \bigl|\mathcal{F}\bigr|.$
\label{first_lower}
\end{theorem}

\begin{proof}
Suppose, for the sake of contradiction, that there exist \emph{many} functions \(F\in\mathcal{F}\) whose parameters \(\bigl|P(F)\bigr|\) are strictly less than \(\log_2|\mathcal{F}|\).  We will show how this leads to a communication protocol that transmits \(|\mathcal{F}|\) distinct messages using fewer than \(\log_2|\mathcal{F}|\) bits for many of those messages, contradicting basic principles of information theory (e.g., via Kolmogorov complexity).

For simplicity, we assume that $|\mathcal{F}|$ is a power of 2, but this technique generalizes to arbitrary finite $|\mathcal{F}|$.  Denote by \(B\) a bijection \(B: \mathcal{F}\to\{0,1\}^k\).  Consider two parties, Alice and Bob:

\vspace{0.3in}

\textbf{Setup:}
\begin{itemize}
    \item Both Alice and Bob know the algorithm \(T\), the function class \(\mathcal{F}\), and the parameters \(P(F)\) for each \(F\in\mathcal{F}\). 
    \item They also agree on the bijection \(B\).
\end{itemize}

\textbf{Protocol:}
\begin{itemize}
   \item Alice receives a \(k\)-bit string \(s\).  
   \item Alice looks up \(F = B^{-1}(s)\in\mathcal{F}\).  
   \item Alice sends Bob the string \(P(F)\).  By our assumption, \(\bigl|P(F)\bigr|<k\) for many \(F\).  
   \item Bob computes \(T\bigl(P(F),u\bigr)\) for all \(u\in U\).  Because the functions in \(\mathcal{F}\) are distinct and \(T\) agrees with \(F\) on all \(u\in U\), Bob can uniquely identify \(F\).  
   \item From \(F\), Bob recovers \(s = B(F)\).
\end{itemize}

Because Bob can recover \(s\) from fewer than \(k\) bits, we have compressed \(k\)-bit messages into fewer than \(k\) bits for \emph{many} possible messages—contradicting the fact that you cannot reliably encode all \(k\)-bit messages into fewer than \(k\) bits.  Hence only a negligible fraction of the functions in \(\mathcal{F}\) can have \(\bigl|P(F)\bigr|<k\), establishing that \(\bigl|P(F)\bigr|\ge\log_2|\mathcal{F}|\) for almost all \(F\).
\end{proof}

Note that this theorem does not claim a parameter-length lower bound for any \emph{particular} function \(F\in\mathcal{F}\).  Rather, it asserts that if you want a \emph{single} network (or any single “universal” structure) to compute \emph{all} functions in \(\mathcal{F}\) on inputs in \(U\), then for the vast majority of those functions, the parameter description must be at least \(\log_2|\mathcal{F}|\) bits.  This parallels the usual Kolmogorov complexity result: almost all objects in a large set require long descriptions, though specific individual objects can sometimes be described more succinctly.

\subsection{Parameter Driven Algorithms with Errors}

We now extend the previous framework to allow a parameter driven algorithm to make mistakes on some inputs, a scenario that arises in real neural networks.  We consider two ways that errors could arise:

\begin{enumerate}
    \item {\bf Probabilistic errors.}
    
   The algorithm’s execution can include random sampling (e.g., randomized choices in the neural network), so that for each input \(x\), the output may be incorrect with some probability.  In communication terms, Bob could simply sample the output of \(T\bigl(P(F),x\bigr)\) multiple times per input.  If the probability of a correct output exceeds \(\tfrac{1}{2}\), then with high confidence Bob can recover the correct behavior of \(F\).  This yields the same contradiction as before, and we do not consider this scenario further.

  \item {\bf Systematic errors on a subset of inputs.}
  
   Instead, there may be a subset of the possible inputs $u$ on which \(T\bigl(P(F),u\bigr)\) permanently disagrees with \(F\).  Specifically, for any \(F\) in a family \(\mathcal{F}\), and some \(\epsilon<0.5\), \(T\) is correct for at least a \(\bigl(1-\epsilon\bigr)\)-fraction of \(u\in U\) (but possibly wrong on the rest).  We say \(T\) \emph{\(\epsilon\)-correctly computes} \(\mathcal{F}\) if for every \(F\in\mathcal{F}\), there is a subset \(U_F\subseteq U\) with \(\tfrac{\lvert U_F\rvert}{\lvert U\rvert}\ge 1-\epsilon\) such that
   \[
   T\bigl(P(F),u\bigr)\;=\; F(u)
   \quad\text{for all}\quad u\in U_F.
   \]

\end{enumerate}

We cannot hope for as strong a lower bound in the presence of these kinds of errors as we did for the error-free case.  Consider for example, a class of functions \(\mathcal{F}\) where all the functions only differ on a single  input \(\bar{u}\in U\): if $\forall F_1, F_2 \in \mathcal{F}$, $u \in U - \{\bar{u}\}$, $F_1(u) = F_2(u)$, \(\bar{u}\) is always miscomputed by \(T\) to the same value, and all other inputs are always computed correctly, then no parameters are needed to distinguish among those functions, despite $T$ only being incorrect on a single input.

Instead, we focus on a subset of functions in $\mathcal{F}$ that can always be distinguished from each other. Specifically, we say that $\mathcal{F}' \subseteq \mathcal{F}$ is $\beta$-robust if for all $F_1, F_2 \in \mathcal{F}'$ with $F_1 \neq F_2$, there exists $U'_{F_1F_2} \subseteq U$ such that $\frac{|U'_{F_1F_2}|}{|U|} > \beta$ and $\forall u \in U'_{F_1F_2}$, $F_1(u) \neq F_2(u)$.  In other words, $F_1$ and $F_2$ map strictly more than a fraction of $\beta$ of the inputs to different outputs.  We will use a $\beta$-robust $\mathcal{F}'$ in our protocol as an error correcting code with Hamming distance $\beta|U|$, where every $F \in \mathcal{F}'$ is a codeword with every $u \in U$ providing one symbol $F(u)$ for that codeword.

Let \(\epsilon<0.5\), and suppose \(\mathcal{F}\subseteq\{\,F:U\to V\}\) contains a \emph{non-empty} \(\beta\)-robust subset \(\mathcal{F}'\subseteq \mathcal{F}\) with \(\beta\ge 2\epsilon\).  Let \(T\) be any parameter driven algorithm that \(\epsilon\)-correctly computes \(\mathcal{F}\).  For each \(F\in\mathcal{F}\), let \(P(F)\) be its parameter description.

\begin{theorem}
\label{main-lower}
For almost all $F \in \mathcal{F}'$, $|P(F)| \geq \log |\mathcal{F}'|$.
\end{theorem}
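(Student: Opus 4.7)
The plan is to lift the Alice--Bob transmission protocol from the error-free theorem to this noisy setting, using the $2\epsilon$-robust subset $\mathcal{F}'$ in place of $\mathcal{F}$ and letting robustness play the role that exact equality played before. In the setup, Alice and Bob share $T$, $\mathcal{F}'$, the descriptions $P(F)$ for every $F \in \mathcal{F}'$, and a bijection $B$ between $\mathcal{F}'$ and $\log|\mathcal{F}'|$-bit strings. Given a $\log|\mathcal{F}'|$-bit string $S$, Alice sets $F = B(S)$ and transmits $P(F)$, which, by the assumed violation, uses fewer than $\log|\mathcal{F}'|$ bits for most $F$.

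On Bob's side the decoding step must be modified: simulating $T(P(F), x)$ on every $x \in X$ no longer recovers $F$ itself, only a function $F^\star$ from $U$ to $V$ that, since $T$ computes $\mathcal{F}$ $\epsilon$-correctly, agrees with the intended $F$ on at least a $(1-\epsilon)$ fraction of $X$. Bob's decoder then searches $\mathcal{F}'$ for a function $G$ whose agreement with $F^\star$ on $X$ is at least $1-\epsilon$, and applies $B^{-1}$ to recover $S$. The crux is showing that $F$ is the unique such $G$. If two distinct $F_1, F_2 \in \mathcal{F}'$ both agreed with $F^\star$ on at least a $(1-\epsilon)$ fraction of $X$, then on the intersection of their agreement sets --- which covers at least a $1 - 2\epsilon$ fraction of $X$ --- we would have $F_1(x) = F^\star(x) = F_2(x)$, so $F_1$ and $F_2$ would differ on at most a $2\epsilon$ fraction of $X$. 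This contradicts the strict $2\epsilon$-robustness defining $\mathcal{F}'$, so Bob's decoder is unambiguous.

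Having decoded $S$ correctly, we arrive at exactly the same Kolmogorov-complexity contradiction as in the error-free theorem: Alice has transmitted an arbitrary $\log|\mathcal{F}'|$-bit message while spending strictly fewer bits on most messages, which is impossible. The main obstacle is really this union-bound uniqueness step for Bob's decoder; everything else is a direct transcription of the earlier protocol, with the ``almost all'' quantifier handled identically by counting binary strings of length less than $\log|\mathcal{F}'|$. Two minor points I would flag in the write-up: Bob's exhaustive search over $\mathcal{F}'$ may be computationally costly but we charge only for communication, and the hypothesis $\epsilon < 0.5$ is what makes the $1-\epsilon$ agreement threshold strictly exceed $1/2$ and keeps the $2\epsilon$-robustness condition nontrivial, so the whole argument stays sharp.
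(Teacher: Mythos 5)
Your proposal is correct and follows essentially the same route as the paper: reuse the Alice--Bob protocol with the functions of $\mathcal{F}'$ as codewords, have Bob simulate $T(P(F),x)$ on all inputs to obtain $F^*$, and invoke the strict $2\epsilon$-robustness together with $\epsilon$-correctness to decode uniquely. The only cosmetic difference is that your Bob uses a threshold decoder (agreement at least $1-\epsilon$) with an explicit uniqueness argument, while the paper's Bob takes the maximum-agreement codeword; the two decoders are equivalent under exactly the robustness computation you give.
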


\begin{proof}
As before, we prove this by constructing a communication protocol that would represent \(|\mathcal{F}'|\) messages into fewer than \(\log_2|\mathcal{F}'|\) bits, contradicting standard information-theoretic limits.

\vspace{0.1 in}

\textbf{Setup:}
\begin{itemize}
   \item Alice and Bob are both given \(T\), \(\epsilon\), \(\mathcal{F}\), \(\mathcal{F}'\), and \(P(F)\) for all \(F \in \mathcal{F}\).  
   \item Alice and Bob also agree on a bijection \(B\) from \(\mathcal{F}'\) to \(\{0,1\}^{\log_2|\mathcal{F}'|}\).
\end{itemize}

\textbf{Protocol:}
\begin{itemize}
    \item Alice is given a message \(s\), a \(\log_2|\mathcal{F}'|\)-bit string.  
    \item Alice identifies \(F = B^{-1}(s)\in \mathcal{F}'\).  
    \item Alice transmits the parameter string \(P(F)\) to Bob.  
    \item Bob uses \(T\bigl(P(F),u\bigr)\) for all \(u\in U\) to define a function \(F^*:\,U\to V\).  
    \item Because \(T\) is \(\epsilon\)-correct on \(\mathcal{F}\) (and hence on \(\mathcal{F}'\subseteq \mathcal{F}\)), \(F^*\) agrees with \(F\) on at least \(\bigl(1-\epsilon\bigr)\lvert U\rvert\) inputs.  
    \item Given the \(\beta\)-robustness of \(\mathcal{F}'\) (with \(\beta\ge 2\epsilon\)), no other \(F'\neq F\) in \(\mathcal{F}'\) can match \(F^*\) on as many inputs.  Hence Bob can recover \(F\) by picking the function in \(\mathcal{F}'\) closest to \(F^*\).  
    \item Finally, Bob determines that $s = B(F)$.
\end{itemize}

If too many functions \(F\in\mathcal{F}'\) had short parameter encodings \(\bigl|P(F)\bigr|<\log_2|\mathcal{F}'|\), Alice and Bob would transmit \(\log_2|\mathcal{F}'|\)-bit messages in fewer than \(\log_2|\mathcal{F}'|\) bits—an impossibility by standard information-theoretic arguments (e.g., Kolmogorov complexity).  Therefore, for almost all \(F\in \mathcal{F}'\), the parameter length must satisfy 
\(\,\lvert P(F)\rvert\ge \log_2\lvert\mathcal{F}'\rvert.\)
\end{proof}

We next demonstrate how to apply this to the $2$-AND function. As an intermediate step, we first prove a lower bound on parameter driven algorithms for the Neural Permutation problem, where \(\mathcal{F}\) is the class of all permutations on a set \(U\).  Let \(\epsilon<0.5\), and suppose \(T\) \(\epsilon\)-correctly computes each permutation \(F\in\mathcal{F}\).  
\begin{corollary}
Any such $T$ requires a parameter description of length at least 
$\log [((1-2\epsilon)|U|)!] = \Omega(|U| \log |U|)$.
\end{corollary}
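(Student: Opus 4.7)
The plan is to apply Theorem~\ref{main-lower} with $\mathcal{F}$ equal to the set of all $n!$ permutations of $U$, where $n=|U|$. Writing $k=(1-2\epsilon)n$, the task reduces to exhibiting a $2\epsilon$-robust subset $\mathcal{F}'\subseteq\mathcal{F}$ of cardinality essentially $k!$. A pair of permutations is $2\epsilon$-far precisely when they agree on strictly fewer than $k$ inputs, so $\mathcal{F}'$ is a permutation code with minimum Hamming distance exceeding $2\epsilon n$.

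The core step is a sphere-packing argument inside the symmetric group. First I would bound the size of a Hamming ball of radius $2\epsilon n$ about a fixed permutation $\pi$: the number of permutations agreeing with $\pi$ on at least $k$ coordinates is
\[
\sum_{j=0}^{n-k}\binom{n}{j}D_j \;\le\; \sum_{j=0}^{n-k}\binom{n}{j}\,j! \;=\; \sum_{j=0}^{n-k}\frac{n!}{(n-j)!} \;\le\; (2\epsilon n + 1)\,\frac{n!}{k!},
\]
where $D_j$ denotes the derangement number, since one picks the $j$-element disagreement set, then chooses a permutation of those outputs with no fixed points. A greedy packing then builds $\mathcal{F}'$: repeatedly pick a permutation, add it to $\mathcal{F}'$, and delete its entire Hamming ball of radius $2\epsilon n$. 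Since each step removes at most $(2\epsilon n + 1)\,n!/k!$ permutations from a universe of $n!$, the process produces $|\mathcal{F}'|\ge k!/(2\epsilon n + 1)$ pairwise $2\epsilon$-robust permutations. Feeding this into Theorem~\ref{main-lower} then yields $|P(F)| \ge \log|\mathcal{F}'| \ge \log(k!) - O(\log n)$ for almost all $F\in\mathcal{F}'$.

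The main obstacle is the polynomial $(2\epsilon n + 1)$ slack in the ball-size bound, which falls short of the exact $\log((1-2\epsilon)|U|)!$ claimed in the corollary. I would close this gap in one of two ways: (i) use the sharper estimate $D_j \sim j!/e$ combined with the observation that the terms $n!/(n-j)!$ grow geometrically in $j$, so that the sum is dominated by its final term up to a constant factor; or (ii) invoke a Gilbert--Varshamov-style existence argument, noting that two uniformly random permutations agree on a number of inputs whose distribution concentrates around $1$, so a random subset of size $\Theta(k!)$ is $2\epsilon$-robust with positive probability. Either refinement preserves (and, in fact, matches) the leading $\Theta(n\log n)$ behaviour that the corollary records, and the greedy argument above is what I expect to be the cleanest route to present in the proof.
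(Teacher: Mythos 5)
Your proposal is correct and takes essentially the same route as the paper: invoke Theorem~\ref{main-lower} after greedily packing a $2\epsilon$-robust permutation code, bounding the number of permutations each chosen codeword eliminates. The only difference is the ball-size estimate: the paper bounds each deleted ball by $\binom{|U|}{2\epsilon|U|}(2\epsilon|U|)!$ (choose a set of $2\epsilon|U|$ positions allowed to disagree and an arbitrary arrangement there), which equals $|U|!/\bigl((1-2\epsilon)|U|\bigr)!$ exactly, so it avoids the $(2\epsilon|U|+1)$ slack you flag and yields $|\mathcal{F}'| \geq \bigl((1-2\epsilon)|U|\bigr)!$ on the nose, without needing your derangement or Gilbert--Varshamov refinements.
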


\begin{proof}
By Theorem~\ref{main-lower}, it suffices to exhibit a \(\beta\)-robust subset \(\mathcal{F}' \subseteq \mathcal{F}\) of size \(((1-2\epsilon)|U|)!\) with \(\beta \ge 2\epsilon\).  We construct \(\mathcal{F}'\) greedily: pick any unused permutation \(F\), add it to \(\mathcal{F}'\), then remove from consideration all permutations that do not differ from \(F\) on at least \((2\epsilon)|U|\) inputs.  Each chosen permutation eliminates at most \(\binom{|U|}{2\epsilon|U|}\,(2\epsilon|U|)!\) permutations, so we can place at least
\[
   \frac{|U|!}{\binom{|U|}{2\epsilon|U|}\,(2\epsilon|U|)!}
   \;\;=\;\;((1-2\epsilon)|U|)!
\]
permutations into \(\mathcal{F}'\).  These permutations differ from each other on more than a fraction \(2\epsilon\) of inputs, as desired.  Note that the subset we have constructed is essentially a permutation code \cite{slepian1965permutation}. 
\end{proof}

\begin{corollary}
For any $m$, $m'$ with $m' \leq {m \choose 2}$, let $T$ be any parameter driven algorithm that computes $2$-AND$_{m,m'}$ $\epsilon$-correctly.  $T$ requires a parameter description length of at least $\Omega(m' \log m')$.
\end{corollary}

\begin{proof}
Fix \(m\) and \(m'\le \binom{m}{2}\).  We construct a class \(\mathcal{F}\) of 2-AND$_{m,m'}$ instances and a set \(U\) of inputs which demonstrate this bound. First, choose any set \(S\subseteq \{\, (i,j)\colon 1 \le i < j \le m\,\}\) of size \(\lvert S\rvert = m'\).  Each element of \(S\) is a pair of input coordinates \((i,j)\).  Then consider all $F \colon \{0,1\}^m \;\to\; \{0,1\}^{m'}$ which compute the ANDs of exactly those pairs in \(S\), including all different orderings across the \(m'\) 
output positions.  Concretely, for each permutation \(\sigma\) of \(\{1,2,\dots,m'\}\), define
   \[
     F_\sigma(x_1,\dots,x_m)\;=\;
\]
\[
     \bigl(\,x_{\,i_{\sigma(1)}} \wedge x_{\,j_{\sigma(1)}},\;
            x_{\,i_{\sigma(2)}} \wedge x_{\,j_{\sigma(2)}},\;\dots,\;
            x_{\,i_{\sigma(m')}} \wedge x_{\,j_{\sigma(m')}}\bigr),
   \]
where \(\{(i_k,j_k)\}_{k=1}^{m'}\) is an enumeration of the pairs in \(S\).  Let $\mathcal{F} \;=\; \bigl\{\,F_\sigma \;\mid\; \sigma \text{ is a permutation of } \{1,\dots,m'\}\bigr\}$ and thus \(\lvert \mathcal{F}\rvert = m'!\).

Note that if $m' \ll {m \choose 2}$, then most two-hot inputs will have all entries of the output evaluate to 0; hence if $U$ were to consist of all two-hot inputs, $T$ could compute $\epsilon$-correctly by simply producing the all 0s result for every input.
Instead, we restrict $U$ to a specific set of \(m'\) \emph{two-hot inputs}, one for each pair in \(S\).  Concretely, for each \((i_k,j_k)\in S\), define $u_k \;=\; e_{\,i_k} + e_{\,j_k} \;\in\;\{0,1\}^m$, where \(e_r\) is the standard basis vector with a \(1\) in position \(r\) and \(0\) in every other position.  Thus each \(u_k\) has exactly two coordinates equal to \(1\).  Let $U \;=\; \{\,u_1,\dots,u_{m'}\}$, and so \(\lvert U\rvert = m'\).  Each \(F_\sigma \in \mathcal{F}\) induces a \emph{distinct} labelling of the inputs \(U\) according to the permutation $\sigma$. 
The Corollary now follows from the exact same argument as was used for the permutation function.
\end{proof}

Note that we have made no assumptions here about whether the inputs and/or outputs are stored in superposition, and so this bound applies in all four combinations of superposition or not. Also, note that we can assume that $m' \geq \frac{m}{2}$ since if $m'$ is smaller than that, then we can remove any unused input entries from the problem, thereby reducing $m$.   Finally, we again point out that for any neural network in our upper bound model (and thus using square matrices) and a constant number of bits per parameter, this lower bound implies that the number of neurons required is $\Omega\left(\sqrt{m' \log m'}\right)$.

\subsection{Possible Extensions to LLM Parameterization}
\label{llm}

Although one might argue that a single trained large language model (LLM) represents only one function and therefore falls outside our lower bound framework, modern neural architectures are typically designed to implement a vast family of functions. The architecture’s high expressibility is realized through its parameters, which in turn are specified by training.  Our lower bound applies to this underlying expressibility, prior to training, rather than to a single, fully trained model.

To see how one could potentially establish a parameterization lower bound for LLMs, consider training a network architecture $\Upsilon$ on a corpus $D$, yielding a model $\Upsilon(D)$, which computes a function $F(\Upsilon(D))$ (where different models can still compute the same underlying function).  Consider using two very different datasets—e.g., $D_E$, an entirely English text versus $D_M$, an entirely Mandarin text. It seems likely these two training regimes yield significantly different functions: $F(\Upsilon(D_E)) \neq_{2\epsilon} F(\Upsilon(D_M))$, where we use $ \neq_{2\epsilon}$ to denote that two functions differ on a fraction of at least $2\epsilon$ of their inputs.  More drastically, let $D$ be a corpus of length $r$, measured in total words. Let $D'$ be a random permutation of the entire sequence of words.  With high probability, $D'$ disrupts most of the natural linguistic structure in $D$, and so it seems likely $F(\Upsilon(D')) \neq_{2\epsilon} F(\Upsilon(D))$.

A stronger claim is that for two distinct random permutations $D''$ and $D'$ of the original corpus $D$, training $\Upsilon$ on each would yield two functions different from each other.  Both permutations jumble the original corpus but do differently jumbled training sets lead to functions different from each other?  If we could show that for every pair $D', D''$ of sufficiently different permutations of $D$, we have $F(\Upsilon(D'')) \neq_{2\epsilon} F(\Upsilon(D'))$, we could use the techniques above to provide a lower bound of $\Omega(r \log r)$ on the length of the parameter description needed to specify $\Upsilon$. While we do not attempt such a proof here, investigating the size of this permutation-based function family could be a fruitful direction for future work.

\subsection{Lower Bound for Unordered 2-AND}
\label{unordered}

We here adapt the lower bound proof to the case where the problem is to compute a {\em set} of $m'$ logical ANDs, without regard to the order in which the results appear at the output. This is a more relaxed condition than the one presented in the original problem, but as we will show, it still necessitates a significant number of parameters. An instance of the {\em Unordered 2-AND$_{m,m'}$} problem is defined by a {\em set} $G = \{f_1, f_2, ..., f_{m'}\}$, where each $f_i$ is the logical AND of a unique pair of the $m$ input variables.

The crucial difference here is that $G$ is a set, not an ordered tuple. A neural network $N(G)$ correctly solves this problem if, for any input $u$, its output is a multiset corresponding to $\{f_1(u), f_2(u), ..., f_{m'}(u)\}$. The task of the parameter-driven algorithm is to configure the network to compute the correct set of logical operations specified by $G$.
The class of all such problems, denoted $\mathcal{G}_{m,m'}$, consists of all possible sets of $m'$ distinct 2-input ANDs chosen from the $m$ variables. The total number of unique pairs of variables is $\binom{m}{2}$. Therefore, the size of this problem class is:
$$|\mathcal{G}_{m,m'}| = \binom{\binom{m}{2}}{m'}$$

This removes the $m'!$ factor present in the ordered version of the problem.
We first present a straightforward bound for the error-free case and then provide a more detailed proof for the robust case where the computation is allowed to have up to a constant fraction of errors.

{\bf Error-Free Computation}  For a parameter-driven algorithm that computes the unordered 2-AND problem without errors, a simple counting argument suffices.

\begin{corollary}
Let $T$ be any parameter-driven algorithm that correctly computes every function in the unordered 2-AND$_{m,m'}$ class. For almost all functions $F \in \mathcal{G}_{m,m'}$, the length of the parameter string $P(F)$ must be at least $\log_2 |\mathcal{G}_{m,m'}|$. This implies a lower bound of:
\begin{itemize}
\item $\Omega(m' \log m)$ bits when $m'$ is polynomially smaller than $\binom{m}{2}$.
\item $\Omega(m')$ bits when $m'$ is a constant factor smaller than $\binom{m}{2}$.
\end{itemize}
\end{corollary}

\begin{proof}
The proof directly follows Theorem \ref{first_lower}. Since the algorithm $T$ must be able to distinguish between all $\binom{\binom{m}{2}}{m'}$ possible functions, there must be at least that many distinct parameter settings. By the same information-theoretic argument, for almost any function $F$, its parameter description $P(F)$ must have length at least $\log_2 \left(\binom{\binom{m}{2}}{m'}\right)$.

We can bound this quantity using $\left(\frac{N}{k}\right)^k \le \binom{N}{k}$, where $N = \binom{m}{2}$ and $k=m'$. Taking the logarithm gives:
$$|P(F)| \ge \log_2 \binom{N}{k} \ge k \log_2\left(\frac{N}{k}\right) = m' \log_2\left(\frac{\binom{m}{2}}{m'}\right)$$
This expression simplifies to $\Omega(m' \log m)$ or $\Omega(m')$ under the conditions stated in the corollary, establishing the lower bound for the error-free case.
\end{proof}

{\bf Computation with Errors}  For the unordered case of computation with errors, we demonstrate a linear in $m'$ lower bound, provided that the 2AND problem is not too close to complete, and the correctness of the network is slightly higher than was required in the ordered case.

\begin{corollary}
For any $m, m'$ with $m' \le \binom{m}{2}/2$, let $T$ be any parameter-driven algorithm that computes unordered 2-AND$_{m,m'}$ $\epsilon$-correctly for any constant $\epsilon < 1/8$. For almost all such problems, $T$ requires a parameter description length of at least $\Omega(m')$.
\end{corollary}

\begin{proof}
We follow the structure dictated by Theorem \ref{main-lower} by constructing a specific, large, and robust subset of the function class $\mathcal{G}_{m,m'}$.

1.  {\bf Construct a Base Set and Input Set.} Assume $m$ is large enough such that we can choose $2m'$ distinct pairwise ANDs (distinct here means every two pairwise ANDs differ on at least one of the two inputs). Let this be our {\em base set of ANDs}, $A = \{a_1, a_2, \dots, a_{2m'}\}$. Let $U$ be the set of $2m'$ corresponding two-hot inputs, $U = \{u_1, \dots, u_{2m'}\}$, where input $u_k$ is constructed to make AND $a_k$ evaluate to 1, while all other ANDs $a_j$ (for $j \neq k$) evaluate to 0. Thus, $|U|=2m'$.

2.  {\bf Construct a Robust Function Collection $\mathcal{F}'$.} We define a collection of functions $\mathcal{F}' \subseteq \mathcal{G}_{m,m'}$. Each function $F \in \mathcal{F}'$ is defined by a set of exactly $m'$ ANDs chosen from the base set $A$. We can represent each such function by a binary string of length $2m'$, where the $k$-th bit is 1 if $a_k \in F$ and 0 otherwise.

We construct $\mathcal{F}'$ to be a collection of such functions where the Hamming distance between the corresponding binary strings of any two functions, $F_1, F_2 \in \mathcal{F}'$, is at least $d = m'/2$. Such a collection is an error-correcting code. By the Gilbert-Varshamov bound, we know there exists such a collection $\mathcal{F}'$ of size:
$$|\mathcal{F}'| \ge \frac{2^{2m'}}{\sum_{i=0}^{d-1} \binom{2m'}{i}} = \frac{2^{2m'}}{\sum_{i=0}^{m'/2-1} \binom{2m'}{i}}$$
Using standard bounds on binomial sums (e.g., from Chernoff bounds), the denominator is at most $2^{2m'} \cdot e^{-m'/8}$. This guarantees the existence of a collection $\mathcal{F}'$ whose size is exponential in $m'$, i.e., $|\mathcal{F}'| \ge 2^{m'/8 \ln 2}$.

3.  {\bf Show the Collection is Robust.} We now show that this collection $\mathcal{F}'$ is $\beta$-robust with $\beta = 1/4$ over the input set $U$. Consider any two distinct functions $F_1, F_2 \in \mathcal{F}'$. By construction, their corresponding binary indicator vectors have a Hamming distance of at least $m'/2$. This means their defining sets of ANDs, $S_1$ and $S_2$, have a symmetric difference $|S_1 \Delta S_2| \ge m'/2$.

An input $u_k \in U$ is one where the outputs of $F_1$ and $F_2$ differ if and only if one of the functions contains the AND $a_k$ and the other does not. This is precisely the condition that the $k$-th bit of their indicator vectors differs. Therefore, the number of inputs in $U$ for which $F_1$ and $F_2$ produce different outputs is exactly the Hamming distance between their representations, which is at least $m'/2$.

The fraction of inputs in $U$ on which they differ is therefore at least $\frac{m'/2}{|U|} = \frac{m'/2}{2m'} = \frac{1}{4}$. Thus, the set $\mathcal{F}'$ is $1/4$-robust.

4.  {\bf Apply the Lower Bound Theorem.} We have constructed a $1/4$-robust subset $\mathcal{F}'$ of functions of size $2^{\Omega(m')}$. According to Theorem \ref{main-lower}, for a parameter-driven algorithm $T$ to be $\epsilon$-correctly compute these functions, we require the robustness $\beta$ to satisfy $\beta \ge 2\epsilon$. With $\beta = 1/4$, this condition becomes $1/4 \ge 2\epsilon$, or $\epsilon \le 1/8$.

    Therefore, for any algorithm that computes the unordered 2-AND problem with an error rate $\epsilon < 1/8$, it must be able to distinguish between the functions in our robust set $\mathcal{F}'$. By Theorem \ref{main-lower}, for almost all functions $F \in \mathcal{F}'$, the parameter length must satisfy:
    $$|P(F)| \ge \log_2|\mathcal{F}'| \ge \log_2(2^{m'/8 \ln 2}) = \Omega(m')$$
    This establishes a lower bound of $\Omega(m')$ on the number of parameter bits required.
\end{proof}
\section{Further details and correctness of our 2-AND construction}
\label{upper_full}

We will prove that $n = O(\sqrt{m'}\log m')$ neurons are sufficient for each of the subproblems of our overall procedure described in Section~\ref{preliminaries} and thus that bound also applies to the overall problem since there are a constant number of subproblems. We note that when some of the outputs are placed in a subproblem, the inputs that remain may go from being heavy to light (since they have lost some of their outputs).  We use the convention that we continue to classify such inputs with their original designation.  Also, one or two of the subproblems may become much smaller than the original problem.  However, when we partition the problem into these subproblems, we will treat each subproblem as being of the same size as the original input: we will use a value of $n = O(\sqrt{m'} \log m')$ for each of the subproblems, regardless of how small it has become.  

We also now clarify what we meant above by thresholding the entries of a vector.  This is an operation on an $n$-vector that forces all entries to either 0 or 1.  This thresholding (mapping values $< 1/2$ to 0 and $\geq 1/2$ to 1) can be implemented using two ReLU layers. First, compute $y=$ReLU$(x - 1/4)$ for each entry. Second, compute $1 - ReLU(-2 * y + 1)$ for each entry, which guarantees the objective.  We can do this any time we have an intermediate result that is in superposed representation, and so we only need to be concerned with getting our superposed results to be close to correct.  Note that we cannot use ReLU when an intermediate result is in its uncompressed form, since that would require ReLU to operate on $m \gg n$ entries.

In the analysis that follows, we frequently make use of Chernoff bounds \cite{chernoff} to prove high probability results.  In all cases, we use the following form of the bound:
\[
\Pr(X \geq (1 + \delta) \mu) \leq e^{-\frac{\delta^2 \mu}{2+\delta}}, \quad 0 \leq \delta,
\]
As mentioned above, our algorithms are always correct for all inputs. In our method of constructing the algorithm, there is a small probability that the construction will not work correctly (with high probability it will work).  However, we can detect whether this happened by trying all pairs of inputs being active, and verifying that the algorithm works correctly.  If it does not, then we restart the construction process from scratch, repeating until the algorithm works correctly.  These restarts do not add appreciably to the expected running time of the process of constructing the algorithm.  Also, the resulting neural network itself never uses randomness. This also means that we can chain together an unlimited number of these constructions for different 2-AND (and other) functions, without accumulating error or probability of an incorrect result.

\subsection{Algorithm for double light outputs}
\label{low-influence}

We first handle the case where all inputs are light.  This means that the maximum feature influence is at most \(m'^{\,1/4} \).  We show that in this case $n = O(\sqrt{m'} \log m')$ is sufficient. 

\begin{figure}
\centering
\includegraphics[width=0.35\linewidth]{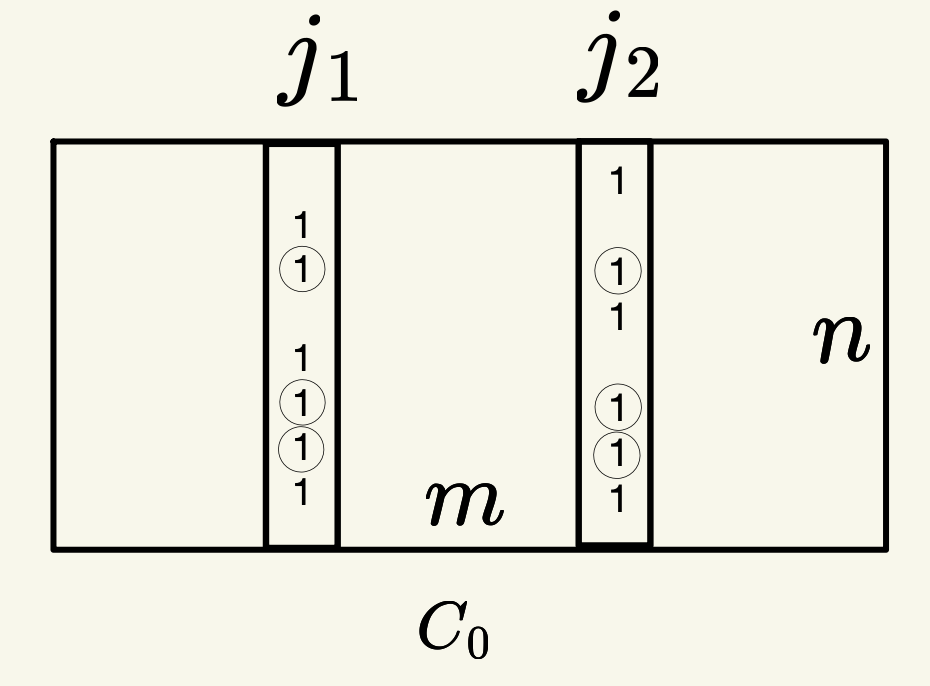}
\caption{The matrix $C_0$ for {\bf Low-Influence-AND}.  The circled 1s are those that correspond to $s_i$, where output $i$ computes $j_1 \land j_2$, and thus the 1s in those rows will line up between $j_1$ and $j_2$.  Other rows with 1s come from different column specifications, and thus only line up by chance, but when that happens it causes spurious 1s to appear after the second ReLU.  When there are at most $O(m'^{\,1/4} \log m')$ total 1s in each column, it is likely there will be $O(\log m')$ such spurious 1s.  However, since $n = O(\sqrt{m'} \log m')$, if there were more 1s in both columns, the number of spurious 1s would become too large to handle.  This is why $m'^{\,1/4}$ represents such an important phase change for what techniques are effective for this problem.}
\label{Clow}
\end{figure}

Algorithm~\ref{alg:low-influence-and}, for double light outputs, called Algorithm {\bf Low-Influence-AND}, is described below.  The matrix $C_0$ is depicted in Figure \ref{Clow}. We point out that we are still using output channels here, since we are actively routing inputs that need to be paired up to the channels specified by the column specifications.  We then combine all the channel specifications for a given input into a single column for that input.  We say that a neural network algorithm {\em correctly computes in superposition} $x_1$ from $x_0$, if $x_0$ and $x_1$ are represented in superposition, and for any input $x_0$, $x_1$ represents the output of the 2-AND problem for that $x_0$, with all intended 0s being numerically 0 and all intended 1s numerically 1. 

\begin{algorithm}[!t]
\caption{\textbf{Low-Influence-AND} (double-light outputs)}
\label{alg:low-influence-and}
\begin{algorithmic}[1]
\STATE Require Number of inputs $m$, number of outputs $m'$, dimension $n=\Theta(\sqrt{m'}\log m')$, sparsity $p=\Theta(\log m / n)$.
\STATE Require Compressed input $x_0 \in \mathbb{Z}_{\ge 0}^n$ with $x_0 = C y_0$ for some $y_0\in\{0,1\}^m$; decompressor $D_0$ built from $C$ (so $D_0 C \approx I_m$).
\STATE Require Output pairs $\{(j_i,k_i)\}_{i=1}^{m'}$ defining $f_i = y_0(j_i)\wedge y_0(k_i)$.
\STATE Ensure $x_1 \in \mathbb{R}^n$ is a compressed representation of the $m'$ AND outputs (for handoff to the next layer).

\STATE \textbf{Sample output-channel codes:} For each $i\in[m']$, draw a column specification $s_i\in\{0,1\}^n$ with i.i.d.\ Bernoulli($p$) entries.
\STATE \textbf{Construct $C_0\in\{0,1\}^{n\times m}$:} For each input $j\in[m]$,
\[
C_0(:,j) \;\gets\;
\begin{cases}
\bigvee\limits_{i:\, j\in\{j_i,k_i\}} s_i & \text{if $j$ participates in at least one output},\\[0.3em]
\mathbf{0} & \text{otherwise},
\end{cases}
\]
i.e., the elementwise OR over the $s_i$ for outputs that include $j$.
\STATE \textbf{Define $D_1\in \mathbb{R}^{m'\times n}$:} For each $i\in[m']$, set the $i$th row to the normalized code,
\[
D_1(i,\cdot)\;\gets\; \frac{s_i^\top}{\lVert s_i\rVert_1}\,.
\]
\STATE \textbf{Fresh compressor and bias:} Sample $C_1'\in\{0,1\}^{n\times m'}$ i.i.d.\ Bernoulli($p$).  Set $b:=-\mathbf{1}_n$.
\STATE \textbf{Expose inputs (decompress):} $\hat y_0 \gets D_0\,x_0$.
\STATE \textbf{Channel preactivation:} $u \gets C_0\,\hat y_0 + b$.
\STATE \textbf{Gate the AND pattern:} $z \gets \mathrm{ReLU}(u)$.
\STATE \textbf{Average within channels (denoise):} $w \gets D_1\,z$.
\STATE \textbf{Compress for next layer:} $x_1 \gets \mathrm{ReLU}\!\big(C_1'\,w\big)$.
\STATE \textbf{return} $x_1$.
\STATE \emph{Compact form:} $x_1=\mathrm{ReLU}\!\Big(C_1' D_1\,\mathrm{ReLU}\!\big(C_0 D_0 x_0 + b\big)\Big)$.
\STATE \emph{Inference note:} Only the $n\times n$ products $W_0:=C_0D_0$ and $W_1:=C_1' D_1$ are materialized at inference; the network applies $x_1=\mathrm{ReLU}\!\big(W_1\,\mathrm{ReLU}(W_0 x_0+b)\big)$.
\end{algorithmic}
\end{algorithm}

\begin{theorem}
\label{first-upper}
When the maximum feature influence is at most $m'^{\,1/4}$, at most 2 inputs are active, and $n = O(\sqrt{m'} \log m')$, Algorithm {\bf Low-Influence-AND} with high probability correctly computes in superposition $x_1$ from $x_0$.
\end{theorem}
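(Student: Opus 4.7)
The plan is to trace the input through each stage of the common structure and show that every intermediate deviation from its ideal value is small enough to be absorbed by the subsequent thresholded ReLU. Condition on the hardest case: two active inputs $j_1, j_2$ with $i$ the output index such that output $i$ computes $j_1 \land j_2$ (the no-match case is symmetric). The warm-up analysis gives $D_0 x_0 = e_{j_1} + e_{j_2} + \nu$ with small decompression noise $\nu$, so $(C_0 D_0 x_0)_e$ equals the number of columns in $\{j_1,j_2\}$ of $C_0$ having a 1 in row $e$, up to a contribution from $\nu$ that is absorbed by the bias $-\mathbf{1}$ and first ReLU. Hence $x'_1$ equals the indicator of rows $e$ where $(C_0)_{e,j_1} = (C_0)_{e,j_2} = 1$.

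I would then decompose the support of $x'_1$ into a \emph{true} part, those $e$ with $s_i(e)=1$ (which forces both columns $j_1$ and $j_2$ to be 1, since $s_i$ is OR-ed into both), and a \emph{spurious} part, those $e$ where some $s_{i'}\neq s_i$ for an output $i'$ involving $j_1$ and some $s_{i''}\neq s_i$ for an output $i''$ involving $j_2$ simultaneously have a 1 in row $e$. The feature-influence cap $t\leq m'^{\,1/4}$ bounds the number of pairs $(i',i'')$ by $m'^{\,1/2}$, and each pair contributes expected $np^2 = O(\log^2 m'/n)$ coincidences. With $n = \Theta(\sqrt{m'}\log m')$ this totals $O(\log m')$ spurious 1s, concentrated by Chernoff. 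Thus $x'_1 = s_i + \sigma$ where $\sigma$ has support disjoint from $s_i$ and size $O(\log m')$ w.h.p.

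Next, for $D_1 x'_1$: row $i$ of $D_1$ is $s_i^T/|s_i|$, yielding $(D_1 x'_1)_i = 1 + \langle s_i,\sigma\rangle/|s_i|$, and Chernoff on the overlap of $|s_i|=O(\log m')$ with the $O(\log m')$-size $\sigma$ makes the correction $O(1/\sqrt{m'})$. For $i'\neq i$, both $\langle s_{i'}, s_i\rangle/|s_{i'}|$ and the spurious term are $O(1/\sqrt{m'})$ w.h.p. So $D_1 x'_1 = e_i + \eta$ with $\|\eta\|_\infty \leq 1/(c\sqrt{m'})$ for a constant $c$ tunable via the hidden constants in $n$. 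Then row $e$ of $C_1'(D_1 x'_1)$ is $C_1'(e,i) + \langle (C_1')_{e,\cdot}, \eta\rangle$, and the thresholded ReLU should round this to $C_1'(e,i)$, producing $C_1' e_i$, the correct compressed output of 2-AND.

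The main obstacle is the aggregate noise $\langle (C_1')_{e,\cdot}, \eta\rangle$ in this last step: a term-by-term worst-case bound over the $\Theta(\sqrt{m'})$ nonzero entries of $(C_1')_{e,\cdot}$ each of size $O(1/\sqrt{m'})$ already sums to $\Theta(1)$, too large for thresholding to clean up. The proof must instead exploit that $C_1'$ is drawn independently of the matrices determining $\eta$, invoking Chernoff over the randomness in $C_1'$ to show the sum concentrates at $o(1)$ uniformly across all $n$ rows w.h.p. This is also the step that pins the phase boundary at $t = m'^{\,1/4}$: a higher feature influence would inflate $|\sigma|$, push $\|\eta\|_\infty$ above $1/\sqrt{m'}$, and break the concentration argument, explaining why the next subsections need entirely different techniques for heavier inputs.
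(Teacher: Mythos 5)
Your overall plan is the paper's plan (stage-by-stage noise tracking between the two thresholded ReLUs, counting spurious coincidences in $x_1'$ via the influence cap $t\le m'^{\,1/4}$ to get $O(\log m')$ extra 1s, then a concentration argument over the randomness of $C_1'$), but one of your intermediate claims is false and one necessary claim is skipped. The false step is $\|\eta\|_\infty \le 1/(c\sqrt{m'})$. For a fixed $i'\ne i$ the overlap $\langle s_{i'},s_i\rangle$ is at least $1$ with probability $\Theta(np^2)=\Theta(\log m'/\sqrt{m'})$, which is not ``high probability'' in the paper's $1/m^c$ sense, and across all $m'$ rows roughly $\Theta(\sqrt{m'}\log m')$ of them overlap $s_i$ or the support of $\sigma$ in at least one coordinate; each such entry of $\eta$ has size $\Theta(1/|s_{i'}|)=\Theta(1/\log m')$, far above $1/\sqrt{m'}$. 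The statement you actually need (and what the paper proves) is sparsity, not a sup bound: $\eta$ has only $O(m'\log^2 m'/n)=O(\sqrt{m'}\log m')$ nonzero entries, each $O(1/\log m')$ w.h.p.; since row $e$ of $C_1'$ is independent of $\eta$ and has density $\log m'/n = 1/\sqrt{m'}$, it picks up only $O(\log m')$ of them, so the noise in $C_1'y_1(e)$ is $O(1)$ w.h.p.\ and is pushed below the $1/4$ threshold by inflating the constant in $n$. Note that $o(1)$ is both unattainable at $n=\Theta(\sqrt{m'}\log m')$ (the expectation is $\Theta(m'\log^2 m'/n^2)$, a constant) and unnecessary. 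You also need to budget separately the $O(m'^{\,1/4})$ rows $a$ whose output shares an input with $j_1$ or $j_2$, where $s_a$ is embedded in an active column and independence fails; the paper handles these as explicit dependent cases, and your blanket ``w.h.p.\ $O(1/\sqrt{m'})$'' glosses over them.

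The skipped claim is your one-clause dismissal of the decompression noise $\nu$ as ``absorbed by the bias and first ReLU.'' This is exactly the paper's type~(a) claim and it is not automatic: a row $e$ of $C_0$ contains $\Theta(\sqrt{m'})$ ones (itself a Chernoff bound over the $m'$ column specifications, since each specification contributes up to two columns), and in rows where exactly one of the two active columns has a 1 the post-bias value is precisely the accumulated noise, so you must show the sum of $\nu$ over those $\Theta(\sqrt{m'})$ aligned positions stays below $1/4$. The argument mirrors the other one: each aligned entry of $\nu$ is nonzero with probability $O(\log^2 m'/n)$ and has magnitude $O(1/\log m')$, so the per-row accumulation has constant expectation, concentrates, and is tunable below threshold by constants. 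With these two repairs your proposal coincides with the paper's proof.
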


Note that this subsumes the single-use case above.

\begin{proof}
We already demonstrated in our discussion of the single use case that we will get values that are 1 in the entries of $x_1$ that were supposed to be 1s; a very similar argument holds here, and so we only need to demonstrate that the inherent noise of the system does not result in too large values in the entries of $x_1$ that are supposed to be 0s.  
There are two sources of noise in the system:
\begin{enumerate}
    \item[(a)] Multiplying by the decoding matrix $D_0$ is not perfect: there is the potential to have entries of $D_0x_0$ that are supposed to be zero but are actually nonzero since each row of $D_0$ can have a 1 in the same column as a row that corresponds to an input that’s a 1. Or equivalently, each row $i$ of $D_0$ can have a 1 at a location that lines up with a 1 in the row representing the active $x_0$. We need to show that the resulting noise in $D_0C_0x_0$ is small enough to be removed by the first ReLU operation.
    \item[(b)] Multiplying by the decoding matrix $D_1$ is also not perfect for the same reason.  There can be overlap between the different output channels.  Furthermore, since an input can be used multiple times (but in this case at most $m'^{\,1/4}$ times) we can also get 1s in the matrix ReLU$[C_0D_0x_0]$ in places outside the correct output channel.  Both of these effects lead to noise after multiplying by $D_1$, and potentially after subsequently multiplying by $C_1'$ as well.  We also need to show that this noise is small enough to be removed by the second ReLU operation.
\end{enumerate}

We point out that as long as it is small, the noise of type (a) is removed by the first ReLU operation (right after the multiplication by $C_0$), and thus will not contribute to the noise of type (b).  Thus, we can analyze the two types of noise independently.
We handle noise of type (b) first. Let $y_1 = D_1 \text{ReLU}[C_0D_0x_0]$. Our goal is to show that $\text{ReLU}[C_1'y_1]$ only has non-zero entries in the correct places.  Let $x_1' = \text{ReLU}[C_0D_0x_0]$.

\begin{claim}
\label{main-claim}
Any entry of $C_1'y_1$ that does not correspond to a correct 1 of the 2-AND problem has value at most $\epsilon$ due to noise of type (b) with high probability. 
\end{claim}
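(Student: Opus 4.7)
\textit{Proof plan.} The plan is to fix any coordinate $e$ of $C_1'y_1$ with $C_1'[e,i^*]=0$, where $i^*$ is the target output index for the two active inputs $j_1,j_2$ (the zero- and one-input cases are strictly easier), and to show that
\[
N_e \;=\; (C_1' D_1 x_1')[e] \;=\; \sum_{j \neq i^*} C_1'[e,j]\,\frac{s_j^{\top} x_1'}{|s_j|}
\]
is at most $\epsilon$ with high probability; a final union bound then handles all $n$ wrong coordinates and all $\binom{m}{2}$ input pairs simultaneously.

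First I would describe $x_1'$. Under the assumption that the first ReLU has already cleared noise of type~(a), $x_1' = \mathrm{ReLU}[C_0 D_0 x_0 + b]$ is the indicator of $s_{i^*}$ together with the ``spurious'' positions where some $s_a$ ($a\ne i^*$, using $j_1$) coincides with some $s_b$ ($b\ne i^*$, using $j_2$). Since there are at most $t^2 = O(\sqrt{m'})$ such pairs and each contributes $O(p^2 n)$ coincidences in expectation, a Chernoff bound yields $|x_1'| = \Theta(pn)$ with high probability -- only a constant-factor inflation over the ideal $s_{i^*}$.

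Next I would analyze each row of $D_1 x_1'$ for $j \ne i^*$. Because $s_j$ is independent of $x_1'$, the overlap $s_j^{\top} x_1'$ is binomial with parameters $|x_1'|$ and $p$; it equals $0$ with probability close to $1$ (as $|x_1'|\,p \ll 1$) and is typically $1$ conditional on being positive. Thus each $(D_1 x_1')[j]$ is either $0$ or of magnitude $\Theta(1/|s_j|) = \Theta(1/\log m)$, uniformly in $j$ with high probability. Combining with $C_1'[e,j] \sim \mathrm{Bernoulli}(p)$, independent of $D_1$, $N_e$ is essentially $\Theta(1/\log m)$ times the count $N_e^{*} = |\{j \ne i^* : C_1'[e,j]=1 \text{ and } s_j^{\top} x_1' \geq 1\}|$, a sum of $m'$ independent Bernoullis with $\mathbb{E}[N_e^{*}] = \Theta(m' p^2 \log m) = \Theta(\log m')$ when $m\approx m'$. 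By enlarging the constant hidden in $n = \Theta(\sqrt{m'}\log m')$ relative to that hidden in $p = \Theta(\log m/n)$, $\mathbb{E}[N_e]$ can be made smaller than any desired $\epsilon/2$.

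The main obstacle, and the final step, is upgrading this expectation bound into a uniform high-probability bound. My plan is to apply the multiplicative Chernoff inequality from the preliminaries directly to $N_e^{*}$: since $\mathbb{E}[N_e^{*}] = \Omega(\log m')$, the bound $\Pr(N_e^{*} \geq (1+\delta)\mathbb{E}[N_e^{*}]) \leq \exp(-\delta^2 \mathbb{E}[N_e^{*}]/(2+\delta))$ is polynomially small in $m'$ for any constant $\delta$. A second Chernoff-plus-union argument handles the rare event that some $(D_1 x_1')[j]$ is appreciably larger than the typical $1/|s_j|$ (say via the event that $s_j^{\top} x_1' \geq 2$), ensuring that $N_e \leq O(1/(pn))\cdot N_e^{*}$ with high probability. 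A union bound over the $n$ wrong coordinates $e$ and the $\binom{m}{2}$ input pairs is easily absorbed into the polynomial slack. Finally, any value at most $\epsilon < 1/4$ is rounded to exact $0$ by the wide-margin ReLU described in the preliminaries, completing the claim.
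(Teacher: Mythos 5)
Your overall route is essentially the paper's: you decompose the noise at a wrong coordinate $e$ as $\sum_{j\neq i^*} C_1'[e,j]\,s_j^{\top}x_1'/|s_j|$, bound the support of $x_1'$ by $O(\log m')$, show the expected contribution is $O(1)$, concentrate with the stated Chernoff bound, and shrink below $\epsilon$ by inflating the constant in $n$; conditioning on $C_1'[e,i^*]=0$ is a clean way to dispose of the correct output's row, which the paper instead treats as one of its special cases. The final union bound over coordinates and input pairs is also acceptable (the paper instead verifies all two-hot inputs and restarts the construction on failure), since the per-event failure probabilities can be driven to $m^{-c}$ for arbitrarily large $c$.

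There is, however, a genuine gap at the step ``Because $s_j$ is independent of $x_1'$, the overlap $s_j^{\top}x_1'$ is binomial.'' This is false for every output $j\neq i^*$ that uses one of the active inputs $j_1,j_2$: in {\bf Low-Influence-AND}, the column of $C_0$ for $j_1$ is the OR of the specifications of all outputs containing $j_1$, so for such $j$ the $\Theta(\log m')$ positions where $s_j=1$ are automatically 1s of column $j_1$, and $s_j^{\top}x_1'$ is nonzero as soon as column $j_2$ hits one of them --- probability roughly $pn\cdot tp=\Theta(\log m'/m'^{\,1/4})$, about a factor $m'^{\,1/4}$ larger than your binomial estimate $|x_1'|\,p=\Theta(\log m'/\sqrt{m'})$. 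Your expectation and Chernoff computations treat all $m'$ rows as independent of $x_1'$, so as written they do not cover these rows. The claim still holds because at most $2m'^{\,1/4}$ outputs are dependent in this way (this is exactly where the maximum-influence bound $t\le m'^{\,1/4}$ enters), and after the additional factors $\Pr[C_1'(e,j)=1]=O(\log m'/n)$ and $1/|s_j|=\Theta(1/\log m')$ their total contribution is negligible; the paper's proof devotes a separate case analysis to precisely these dependent terms. You need to add that case split (or an explicit bound on the dependent rows' contribution) before invoking concentration; otherwise the very interaction that the feature-influence restriction is meant to control is assumed away.
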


\begin{proof}
For any matrix $M$, we will refer to entry $(i,j)$ in that matrix as $M(i,j)$, and similarly we will refer to entry $k$ in vector $V$ as $V(k)$.   Let $e$ be the index of any entry of $C_1'y_1$ that should not be a 1. We will show that with high probability the value of the entry $C_1'y_1(e)$ is at most $\epsilon$.  Due to our ReLU specific operation, we can assume that all non-zero entries of $x_1'$ are at most 1. We refer to the two active inputs as $i$ and $j$, where $i \neq j$ (there is no type (b) noise if there is only one active input). We first consider the expectation of $C_1'y_1(e)$.  In the following expression for $E[C_1'y_1(e)]$, we let $a$ range over the entries of $C_1'$ in row $e$ and $b$ range over the columns of $D_1$.  For entry $x_1'(b)$ to be a 1, we need $C_0(b,i) = 1$ and $C_0(b,j) = 1$.  For this to translate to a non-zero value in its term of the sum for entry $y_1(a)$, we also need $D_1(a,b) = 1$.  Since the entries in $D_0$ will be $O(\frac{1}{\log m'})$ with high probability, this gives the following:

\begin{equation}
\label{big-formula}
\begin{split}
    E[C_1'y_1(e)] = O \Bigg( & \frac{1}{\log m'} \sum_{a \in 1 \ldots m'} \sum_{b \in 1 \ldots n} \\
    & \Pr[C_0(b,i) = 1] \\
    & \cdot \Pr[C_0(b,j) = 1 \mid C_0(b,i)] \\
    & \cdot \Pr[D_1(a,b) = 1 \mid C_0(b,i), C_0(b,j)] \\
    & \cdot \Pr[C'_1(e,a) = 1] \Bigg)
\end{split}
\end{equation}

Note that $C'_1(c,e)$ is independent of all the other events we are conditioning on, and so we do not need to condition for the probability associated with that event.  The other three events are independent for most terms in the sum, but not so for a small fraction of them.  $D_1(a,b)$ is independent of $C_0(b,i)$ whenever $i$ is not used in the output for row $a$ of $D_1$, and similarly for $C_0(b,j)$.  $C_0(b,i)$ is independent of $C_0(b,j)$, as long as $D_0(d,b)=0$, where row $d$ of $D_0$ corresponds to the output that is an AND of $i$ and $j$ (since then we know that for all such entries $b$ of row $d$, $s_d(c)=0$).  The entries $y_1(d)$ where $D_0(d,b)>0$ are supposed to be non-zero, but they can still contribute noise when multiplied by the compression matrix $C_1'$.

 Thus, we will evaluate this sum using two cases: where there is some dependence between any pair of the four probabilities, and where there is not. We deal with the latter case first, and in this case $\Pr[C'_1(e,a) = 1] = \Pr[D_1(a,b) = 1] = O(\log m' / n)$.  Using a union bound and the fact that no input is used more than $m'^{\,1/4}$ times, we see that 
$\Pr[C_0(b,i) = 1] = O(m'^{\,1/4} \log m' / n)$ and also $\Pr[C_0(b,j) = 1] = O(m'^{\,1/4} \log m' / n)$.  This tells us that the contribution of the independent terms is at most
$$O\left(\frac{1}{\log m'} nm' \left(\frac{\log m'}{n}\right)^2 \left(\frac{m'^{\,1/4} \log m'}{n}\right)^2\right) = O(1).$$

For the case where there is dependency between the different events, we first consider what happens when $i$ is used in the output for row $a$ of $D_1$.  In this case, we can simply assume that $D_1(d_c)=1$ always, which means we lose a factor of $\log m' / n = 1 / \sqrt{m'}$ in the above equation. However, since $i$ can be used in at most $m'^{\,1/4}$ outputs, there are now only  $m'^{\,1/4}$ values of $a$ to consider instead of $m'$, so we also lose a factor $m'^{\,3/4}$. From this we see that the terms of this case do not contribute meaningfully to the value of the sum, and similarly for when $j$ is used in the output for row $a$.  For the case where $D_0(d,b)>0$, where $i$ and $j$ are used in the output for row $d$, we see that all three of the dependent variables will be 1.  However, there is only 1 such row $d$, and we also know that with high probability that row will contain $O(\log m')$ 1s.  Thus, the number of terms in the sum is reduced by $m'^{3/2}$ and we still have the $\frac{\log m'}{n}$ from $\Pr[C'_1(e,a) = 1]$, and so this case does not contribute significantly to the sum either.

To convert this expectation to a high probability result, we can rearrange the terms of the sum to consider only those rows $d$ of $D_1$ that correspond to columns of $C'_1$ where $C'_1(d,e) = 1$ and those columns of $D_1$ that correspond to entries of $x_1'$ where $x_1'(c) = 1$ incorrectly (i.e. $c$ such that there exist $a$ and $b$ such that both $C_0(c,a)=1$ and $C_0(c,b)=1$). The number of non-zero entries in a row of $C_1'$ is $O(m' \log m' / n)$ with high probability (from how $C_1'$ is built).

The number of non-zero entries in $x_1'$ is $O(\log m')$ with high probability. Thus, with high probability, we are summing a total of $O(m' \log^2 m' / n) = O(\sqrt{m'}\log m')$ entries of $D_1$. Each of those entries is either $\Theta(1 / \log m')$ or 0 and takes on the non-zero value with probability $\log m' / n$.  Thus the expectation of that sum is $O(m' \log^2 m' / n^2) = O(1)$.  We can define indicator variables on whether or not each such entry of $D_1$ is non-zero.  We can assume these random variables are chosen independently, and their expected sum is $O(\log m')$, and so a standard Chernoff bound then demonstrates that the number of non-zero entries in $D_1$ will be within a constant of its expectation with high probability.  Thus, $C_1'y_1(e)$ will be $O(1)$ with high probability.  We can make that constant smaller than any $\epsilon$ be increasing the size of $n$ by a constant factor dependent on $\epsilon$.
\end{proof}

Thus, all noise of type (b) will be removed by the second ReLU operation.  We now turn to noise of type (a): there can be incorrect non-zeros in the vector $D_0x_0$ and we want to make sure that any resulting incorrect non-zero entry in the vector $C_0 D_0 x_0$ has size at most $\epsilon$ and thus will be removed by the first ReLU function. 

\begin{claim}
\label{type-a-low}
The amount of type (a) noise introduced to any entry of $C_0D_0x_0$ is at most $\epsilon$ with high probability. 
\end{claim}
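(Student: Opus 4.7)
The plan is to parallel the structure of the proof of Claim \ref{main-claim}. I would fix an arbitrary entry $e$ of $C_0 D_0 x_0$, bound the expected type (a) contribution at $e$, concentrate it, and then union-bound over the $n$ coordinates.

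For the expectation, the type (a) noise at $e$ equals $\sum_{j \text{ inactive}} C_0(e,j) \cdot D_0 x_0(j)$, since only summands whose underlying input is inactive contribute noise (active $j$'s give the intended signal). Two ingredients drive the bound. First, by union-bounding over the at most $k_j \leq m'^{\,1/4}$ column specifications OR'd together to form column $j$ of $C_0$, we have $\Pr[C_0(e,j)=1] \leq k_j p$ with $p = O(\log m'/n)$. Second, for inactive $j$ the expected value of $D_0 x_0(j)$ is $O(\log m'/n)$: the only contribution comes from coordinates where row $j$ of $D_0$ (which has $\Theta(\log m')$ nonzero entries of magnitude $\Theta(1/\log m')$) happens to overlap with a coordinate where the encoding of one of the two active inputs placed a $1$. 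Combining these, and using $\sum_j k_j = 2m'$, the expected noise at $e$ is $O(m' p^2) = O(m' \log^2 m' / n^2) = O(1)$ for $n = \Theta(\sqrt{m'} \log m')$, which can be shrunk below any fixed $\epsilon$ by enlarging the hidden constant in $n$.

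For concentration, I would rewrite the noise so that independent randomness is exposed. Since $C_0(e,j)$ is the OR of $s_i(e)$ over outputs $i$ that use input $j$, we obtain
\[
\text{noise}(e) \;\leq\; \sum_i s_i(e)\, \bigl(T_{j_i^1} + T_{j_i^2}\bigr),
\]
where $T_j = D_0 x_0(j)$ for inactive $j$ and $j_i^1, j_i^2$ are the two inputs of output $i$. The bits $s_i(e)$ are i.i.d.\ Bernoulli$(p)$ and independent of everything determining the $T_j$, which live entirely in the encoding used for $x_0$. I would condition on that encoding, first establishing that the ``weight budget'' $\sum_i (T_{j_i^1}+T_{j_i^2}) = O(m' \log m'/n)$ with high probability (a Chernoff bound on the encoding overlaps suffices once $|c_j| = \Theta(\log m')$ is in hand w.h.p.\ for every encoding column $c_j$), and then applying a Chernoff bound to the conditionally independent weighted sum $\sum_i s_i(e)\, U_i$. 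A final union bound over the $n$ choices of $e$ makes every entry's type (a) noise at most $\epsilon$, so the first ReLU removes it.

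The main obstacle I anticipate is that the noise is a bilinear form in two independent random ensembles---the encoding of $x_0$ driving $D_0$, and the column specifications $s_i$ driving $C_0$---so a single-shot Chernoff does not apply. Decoupling via the conditional argument handles this, but one must carefully bound the weight budget $\sum_i (T_{j_i^1}+T_{j_i^2})$ on the good event that the encoding is well-behaved; a constant-probability outlier here would defeat the subsequent Chernoff step. Once that is controlled, the remaining calculation is entirely analogous to the type (b) analysis in Claim \ref{main-claim}.
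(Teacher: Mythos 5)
Your proposal is correct, and it reaches the paper's conclusion by decoupling the same bilinear form in the opposite order from the paper. The paper conditions on the randomness of $C_0$ first: it shows that the set of columns with a 1 in row $e$ has size $O(\sqrt{m'})$ with high probability, using only the fact that each column specification $s_i$ with $s_i(e)=1$ contributes at most two columns (so the bound $k_j \leq m'^{\,1/4}$ is never needed for this claim), and then uses the independent randomness of $D_0$ to argue that only $O(\log m')$ of the aligned entries of $D_0x_0$ are nonzero, each of size $O(1/\log m')$. You instead condition on the encoding of $x_0$: you bound the weight budget $\sum_j k_j T_j = O(m'\log m'/n)$ with high probability, which uses $\sum_j k_j = 2m'$ and, for its concentration, the cap $k_j \leq m'^{\,1/4}$ so that no single weighted term is large relative to the mean, and only then apply a Chernoff bound over the i.i.d.\ Bernoulli bits $s_i(e)$, whose independence from the encoding is exactly the decoupling you identify as the crux. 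Both routes give expected noise $O(m'\log^2 m'/n^2)=O(1)$, concentration because each term is $O(1/\log m')$ (so the failure probability is $m'^{-\Omega(1)}$ even though the mean is constant), a union bound over the $n$ rows, and the final shrinking below $\epsilon$ by enlarging the constant in $n$. The paper's ordering is marginally lighter, since its intermediate step is an unweighted binomial count on the row support of $C_0$, whereas your budget bound is a weighted sum of independent terms requiring the extra (but routine) verification you flagged; on the other hand, your ordering exposes explicitly how the per-input multiplicities $k_j$ enter, which is the quantity that becomes critical in the heavy-input cases later in the paper.
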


\begin{proof}
For any row $e$ of $C_0$, let $C_0^e$ be the set of columns $a$ of $C_0$ such that $C_0(e,a)=1]$.  We first show that for any $e$, with high probability, $|C_0^e| = O(\sqrt{m'})$.  This follows from how the columns of $C_0$ are chosen: they are defined by the column specifications $s_1, \ldots, s_{m'}$.  Every column specification $s_i$ where $s_i(e) = 1$ contributes at most 2 new columns to $C_0^e$ - one for each input used for output $i$.  There are $m'$ column specifications, and the entries are all chosen i.i.d., with probability of a 1 being $1/\sqrt{m'}$, and so a straightforward Chernoff bound shows that with high probability there are at most $O(\sqrt{m'})$ specifications $s_i$ where $s_i(e) = 1$.  Thus $|C_0^e| = O(\sqrt{m'})$ with high probability.

When we multiply $C_0$ by $D_0x_0$, we will simply sum together the non-zero entries of $D_0x_0$ that line up with the columns in $C_0^e$.  For any $a$ that does not correspond to an active input, using the fact that $x_0$ has $O(\log m')$ non-zero entries and a union bound, we see that
$$\Pr[D_0x_0(a) > 0] \leq O\left(\frac{\log^2 m'}{n}\right) = O\left(\frac{\log m'}{\sqrt{m'}}\right).$$

Thus, the expected number of non-zero terms in the sum $C_0D_0x_0(e)$ is $O(\log m')$.  Furthermore, since the entries of $D_0$ are chosen independently of each other, we can use a Chernoff bound to show that the the number of non-zero terms in the sum for $C_0D_0x_0(e)$ is $O(\log m')$ with high probability.  Finally, we point out that the incorrect non-zeros in $D_0x_0$ have size at most $c / \log m'$ for a constant $c$ with high probability which follows directly from the facts that each entry of $D_0x_0$ is the sum of $\log m'$ pairwise products of two entries, divided by $\log m'$ and the probability of each of those products being a 1 is at most $\log m' / n$. Putting all of this together shows that for any $e$, $C_0D_0x_0(e)$ is at most $O(1)$ with high probability. This can be made smaller than any $\epsilon$ by increasing $n$ by a constant factor dependent on $\epsilon$.
\end{proof}
\end{proof}

\subsection{Algorithm for double heavy outputs}
\label{high-influence}

We here provide the algorithm called {\bf High-Influence-AND}, which is used by our high level algorithm for outputs that have two heavy inputs.  Let $\bar{t}$ be the average influence of the feature circuit.  The algorithm {\bf High-Influence-AND} requires only $n = O(\sqrt{m'} \log m')$, provided that $\bar{t} > m'^{\,1/4}$.  Note that the high level algorithm uses {\bf High-Influence-AND} on a subproblem that has a {\em minimum} feature influence of $m^{\,1/4}$.  This implies that $\bar{t} > m'^{\,1/4}$.  However,  {\bf High-Influence-AND} applies more broadly than just when the minimum feature influence is high - it is sufficient for the {\em average} feature influence to be high.  We here describe the algorithm in terms of the more general condition to point out that if the overall input to the problem meets the average condition, we can just use  {\bf High-Influence-AND} for the entire problem, instead of breaking it down into various subproblems.  

This algorithm uses input channels, in the sense that the column specifications do not depend on which outputs the inputs appear in.  We can do so here for all inputs, because the number of inputs $m$ is significantly smaller than $m'$, and we define $n$ relative to $m'$, not $m$.  Specifically, if $\bar{t} > m'^{\,1/4}$, then $m' > m \cdot m'^{\,1/4}/2$, which implies that $m<2m'^{3/4}$.   This algorithm follows the same common structure as above, with the following modifications to $C_0$ and $D_0$:

\vspace{0.1 in}

\begin{algorithm}[th]
\caption{\textbf{High-Influence-AND} (changes to $C_0$ and $D_1$)}
\label{alg:high-influence-and}
\begin{algorithmic}[1]
\STATE \textbf{Construct $C_0\in\{0,1\}^{n\times m}$ (one column per input):} For each input $j\in[m]$, sample the column $C_0(:,j)$ with i.i.d.\ Bernoulli($q$) entries where $q=m'^{-1/4}$. The $j$th column aligns with the $j$th coordinate of $D_0x_0$. No additional columns are added to $C_0$.
\STATE \textbf{Define $D_1\in\mathbb{R}^{m'\times n}$ by overlap of input codes:} For each output $i$ with input pair $(j_i,k_i)$, let
\[
S_i \;:=\; \{\, t\in[n] : C_0(t,j_i)=1 \ \wedge\ C_0(t,k_i)=1 \,\}.
\]
Set $D_1(i,t)=\frac{1}{|S_i|}$ for $t\in S_i$ and $D_1(i,t)=0$ otherwise (each row averages over its overlap positions).
\end{algorithmic}
\end{algorithm}

\begin{theorem}
\label{high-high-theorem}
With high probability Algorithm {\bf High-Influence-AND} correctly  computes $x_1$ from $x_0$, provided that at most 2 inputs are active, $\bar{t} > m'^{\,1/4}$, and $n = O(\sqrt{m'}{\log m'})$.
\end{theorem}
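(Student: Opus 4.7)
My plan is to mirror the two-source noise decomposition used in the proof of Theorem~\ref{first-upper} for {\bf Low-Influence-AND}, adapting it to the input-channel construction of $C_0$ and $D_1$. Because $D_0$ and $C_1'$ remain standard random compression/decompression matrices, the type (a) analysis of Claim~\ref{type-a-low} carries over: the number of 1s in any fixed row $e$ of the new $C_0$ is $\mathrm{Binomial}(m, 1/m'^{\,1/4})$ with mean $m/m'^{\,1/4} = O(\sqrt{m'})$, using $m < 2 m'^{\,3/4}$ (forced by $\bar t > m'^{\,1/4}$). This matches the bound $|C_0^e| = O(\sqrt{m'})$ from the earlier claim, so the first ReLU removes type (a) noise and the intermediate vector $x_1' = \mathrm{ReLU}[C_0 D_0 x_0 + b]$ is exactly $C_0(\cdot,i) \wedge C_0(\cdot,j)$ where $i,j$ are the two active inputs.

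For the correct output $a$ with inputs $\{i,j\}$, row $a$ of $D_1$ has non-zero entries exactly at the positions where columns $i$ and $j$ of $C_0$ are both 1, all equal to $1/|D_1(a,\cdot)|$; these align precisely with the 1s of $x_1'$, so $(D_1 x_1')(a) = 1$ exactly. For an incorrect output $a'$ with inputs $\{i',j'\}$, define $y_1(a') := (D_1 x_1')(a')$ and split by $|\{i',j'\} \cap \{i,j\}|$. Conditioning on the high-probability event $|D_1(a',\cdot)| = \Theta(\log m')$, the disjoint subcase ($|\cdot|=0$) requires four independent Bernoulli($p$) events per coordinate in the numerator with $p = 1/m'^{\,1/4}$, giving $E[y_1(a')] = \Theta(p^2) = \Theta(1/\sqrt{m'})$; the shared-input subcase ($|\cdot|=1$) has only three independent events, giving $E[y_1(a')] = \Theta(p) = \Theta(1/m'^{\,1/4})$.

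Counting, there are at most $m'$ disjoint $a'$, and at most $2 t_{\max} \leq 2m \leq 4m'^{\,3/4}$ shared $a'$ (this is precisely where $m = O(m'^{\,3/4})$ is essential). For any fixed 0-entry $e$,
\[
E[C_1' y_1(e)] \;=\; \sum_{a'} \Pr[C_1'(e,a')=1] \cdot E[y_1(a')],
\]
and substituting $\Pr[C_1'(e,a')=1] = O(1/\sqrt{m'})$ yields an $O(1)$ contribution from the disjoint sum ($m' \cdot (1/\sqrt{m'})^2$) and an $O(1)$ contribution from the shared sum ($m'^{\,3/4} \cdot (1/m'^{\,1/4}) \cdot (1/\sqrt{m'})$). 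By enlarging the hidden constant in $n = O(\sqrt{m'}\log m')$ this expectation can be pushed below any target $\epsilon < 1/4$; a Chernoff bound on the sum upgrades it to a high-probability bound, and a union bound over the $O(n)$ output coordinates and the $\binom{m}{2}$ possible active input pairs gives the claimed correctness.

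The main obstacle is managing the dependencies, both within $y_1(a')$ (the same entries of $C_0$ appear in both its numerator and denominator) and across different $a'$ that share an input via the same column of $C_0$. I plan to handle this by first conditioning on the two columns of $C_0$ indexed by the active inputs (which fixes $x_1'$ entirely and makes the remaining columns mutually independent), then conditioning on the high-probability event that $|D_1(a',\cdot)| = \Theta(\log m')$ for every $a'$. Once both conditionings are in place, the random variables in the $C_1'$ sum decompose into sufficiently independent pieces to apply the Chernoff bound stated in the preliminary setup.
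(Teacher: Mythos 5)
Your proposal is correct and follows essentially the same route as the paper's proof: the same type (a)/type (b) noise decomposition, the same $\Theta(\log m')$ expected-overlap argument for the intended 1s, an $O(1)$ bound on $E[C_1'y_1(e)]$ pushed below $\epsilon$ by a constant-factor increase in $n$ followed by a Chernoff bound, and the same use of $m \leq 2m'^{\,3/4}$ derived from $\bar t > m'^{\,1/4}$. Your explicit split of incorrect outputs into disjoint versus shared-input cases is simply a more detailed rendering of the dependent/independent-term analysis the paper imports ``analogously'' from Claim~\ref{main-claim}, and it arrives at the same bounds.
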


 \begin{proof}
We first point out that for any output that should be active as a result of the AND, the entries of $x_1$ that should be 1 for that output, will in fact be a 1.  This follows from the fact that for any pair of inputs that appear in an output, the expected number of entries of overlap in their respective columns of $C_0$ is $\Theta(\log m')$, and thus we can use a Chernoff bound to show that it will be within a constant factor of that value.  From there, we see that the correct value of $D_1$ReLU$(C_0D_0x_0+b)$ will be a 1.  Thus, we only need to demonstrate that there is not too much noise of either type (a) or type (b) (as defined in Section \ref{low-influence}).  We demonstrate this with the following two claims:

\begin{claim}
\label{type-b-heavy}
Any entry of $C_1'y_1$ that does not correspond to a correct 1 of the 2-AND problem has value at most $\epsilon$ due to noise of type (b) with high probability. 
\end{claim}

\begin{proof}
For any column of $C_0$, the expected number of 1 entries is $O(n/m'^{\,1/4})=O(m'^{\,1/4} \log m')$, and will be no larger with high probability.  With this in hand, we can use an argument analogous to that in the proof of Claim \ref{main-claim}.  Specifically, for any entry $e$, Equation \ref{big-formula} still represents $E[C_1'y_1(e)]$, and so it follows that $E[C_1'y_1(e)] = \left(\frac{m'^{\,3/2}\log^3m'}{n^3}\right)$.  A similar Chernoff bound as in Claim \ref{main-claim} shows that $C_1'y_1(e)$ will be within a constant of its expectation with high probability.  Thus, $n = O(\sqrt{m'}\log m')$ is sufficient to make $C_1'y_1(e) \leq \epsilon$ with high probability.
\end{proof}

\begin{claim}
\label{type-a-heavy}
The amount of type (a) noise introduced to any entry of $C_0D_0x_0$ is at most $\epsilon$ with high probability. 
\end{claim}

\begin{proof}
Let $N(e)$ be the contribution to entry $e$ in $C_0D_0x_0$ due to this kind of noise.
We first provide an expression for E$[N(e)]$.  Let $H(C_0)$ be the columns of $C_0$, except those that correspond to the active inputs.  In this expression, we let $a$ range over the columns of $H(C_0)$ and $b$ range over all the columns of $D_0$.  We see that 
\begin{multline}
E[N(e)] = 
\frac{1}{\log m'}\sum_{a \in H(C_0)}\sum_{b \in 1 \ldots n} \Pr[x_0(b) = 1]\Pr[D_0(a,b)>0]\Pr[C_0(e,a)=1],
\end{multline}
where the active input not being in $H(C_0)$ implies that the three probabilities listed are independent.  Since $|H(C_0)| \leq m$, there are at most $nm$ terms in this sum, and the first two probabilities are $\frac{\log m'}{n}$, and the third is $\frac{1}{m'^{\,1/4}}$.  This gives us that
$$E[N(e)] = O\left(\frac{1}{\log m'}nm\left(\frac{\log m'}{n}\right)^2\frac{1}{m'^{\,1/4}}\right)=
O\left(\frac{m\log m'}{nm'^{\,1/4}}\right)=O(1),$$
where the last equality uses the fact that $m \leq 2m'^{3/4}$, which follows from the fact that $\bar{t} \geq m'^{\,1/4}$.  This gap between $m$ and $m'$ is why we are able to use this algorithm in the case of high average feature influence, but not when that average is smaller.  Since the summation of probabilities is divided by a $\log m'$ factor, a fairly straightforward Chernoff bound over the choices of $C_0(e,a)$, for $a \in H(C_0)$, shows that this is no higher than its expectation by a constant factor with high probability.  The resulting constant can be made smaller than any $\epsilon$ by increasing $n$ by a constant factor dependent only on $\epsilon$.  
\end{proof}
This concludes the proof of Theorem \ref{high-high-theorem}.
\end{proof}

\subsection{Algorithm for mixed outputs}
\label{mixed-influence}

We now turn to the most challenging of our three subproblems, the case where the outputs are mixed: one heavy and one light input.  As stated above, {\bf High-Influence-AND} from Section \ref{high-influence} is actually effective when some outputs are mixed, provided that the average feature influence of the feature circuit is sufficiently high.  However, what {\bf High-Influence-AND} is not able to handle (with $n = O(\sqrt{m'}\log m')$), is the case where the feature circuit has low average influence, but high maximum influence.  Our algorithm here is used by the high level algorithm for all the mixed outputs, but most importantly it addresses that case of feature circuits with low average influence and high maximum influence.   This involves a combination of input channels for heavy inputs and output channels for light inputs.  Furthermore, we see below that just how high the feature influence of a heavy input is impacts how the problem is divided into input and output channels.  As a result, we will further partition the outputs into two subcases based on a further refinement of the heavy features. Since we overall performed four partitions, this does not affect the overall asymptotic complexity of the solution. 

\vspace{0.1 in}

\begin{algorithm}[th]
\caption{\textbf{Mixed-Influence-AND} (mixed outputs; \emph{changes only}, otherwise use common structure)}
\label{alg:mixed-influence-and}
\begin{algorithmic}[1]
\STATE \textbf{Identify super-heavy inputs:} Label any input that appears in more than $m'^{1/2}$ outputs as \emph{super heavy}. Partition the mixed-output handling into two cases: (i) regular heavy mixed outputs and (ii) super-heavy mixed outputs.
\STATE

\STATE \textit{For regular heavy mixed outputs:}
\STATE \textbf{Disjoint encoding for light vs.\ super heavy:} In the encoding of $x_0$ and the matrix $D_0$, allocate disjoint rows/columns to light inputs and to super-heavy inputs so they do not share any rows or columns.
\STATE \textbf{Columns of $C_0$ for heavy inputs:} For each heavy input, add one column to $C_0$ whose entries are i.i.d.\ Bernoulli$\!\big(m'^{-1/4}\big)$.
\STATE \textbf{Columns of $C_0$ for light inputs conditioned on co-appearance:} For each light input $j$, add one column to $C_0$. For entry $k$ in this column: if there exists a heavy input $i$ that co-appears with $j$ in some output and the heavy column $i$ has a $1$ at entry $k$, then set entry $k$ for $j$ by an independent Bernoulli$\!\big(m'^{-1/4}\big)$ draw; otherwise set it to $0$.
\STATE \textbf{No other $C_0$ columns:} Do not allocate additional columns in $C_0$.
\STATE \textbf{Nonzeros of $D_1$ (gating rule):} For an output, the nonzero entries of its row in $D_1$ are exactly the positions where \emph{both} of its two input columns in $C_0$ have a $1$ (as in \textbf{Low-Influence-AND}/\textbf{High-Influence-AND}).
\STATE

\STATE \textit{For super-heavy mixed outputs:}
\STATE \textbf{Stronger disjointness, including among super-heavies:} In $x_0$’s encoding and $D_0$, allocate disjoint rows/columns to light inputs and to super-heavy inputs, and additionally ensure distinct super-heavy inputs do not share rows or columns with each other.
\STATE \textbf{Columns of $C_0$ for heavy inputs:} For each heavy input, add one column to $C_0$ with i.i.d.\ Bernoulli$\!\big(\gamma^{-1}\big)$ entries, for a constant $\gamma$ (chosen below).
\STATE \textbf{Columns of $C_0$ for light inputs:} For each light input, add one column to $C_0$ with i.i.d.\ Bernoulli$\!\big(2\gamma/\sqrt{m'}\big)$ entries.
\STATE \textbf{Auxiliary mechanism:} Include a subroutine \textbf{detect-two-active-heavies} (described separately) to handle the super-heavy regime.
\STATE \textbf{Nonzeros of $D_1$ (gating rule):} As above, for each output, keep exactly those positions where both participating input columns in $C_0$ have $1$s.
\end{algorithmic}
\end{algorithm}

In the case of regular heavy inputs, we can view the heavy inputs as using input channels (since they are not dependent on how those inputs are used), and the light inputs as using output channels (since they are routed to the channel of the input they share an output with).  We see below that this is effective for regular heavy inputs.  However, for super heavy inputs, this would not work: a super heavy input would have too many light inputs routed to it.  If we do not increase the size of the input channel for the super heavy input, there will be too many light inputs routed to too little space, and as a result, those light inputs would create too much type (a) noise.  And if we do increase the size of the input channel for super heavy inputs, then the super heavy inputs will create too much type (b) noise with each other.  Thus, we need to deal with the super heavy inputs separately, as we did above.  Key to this is {\bf detect-two-active-heavies} which is a way of shutting down this entire portion of the algorithm when two super heavy inputs are active.  This allows us to remove what would otherwise be too much noise in the system.  The output for that pair of inputs will instead be produced by Algorithm {\bf High-Influence-AND}.

\begin{theorem}
With high probability, Algorithm {\bf Mixed-Influence-AND} correctly computes $x_1$ from $x_0$, provided that at most 2 inputs are active, each output contains both a heavy and a light input, and $n = O(\sqrt{m'}{\log m'})$.
\label{mixed-theorem}
\end{theorem}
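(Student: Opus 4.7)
The plan is to split the proof along the same bifurcation the algorithm uses: the regular heavy mixed subcase and the super heavy mixed subcase. In each subcase I would follow the template established by Theorems \ref{first-upper} and \ref{high-high-theorem}: first show that the intended 1s in $x_1$ are produced, then bound Type (a) noise in $C_0 D_0 x_0$ (removed by the first ReLU), and finally bound Type (b) noise in $C_1' y_1$ (removed by the second ReLU). Throughout I would assume at most two inputs are active and casework on their labels (light/light is excluded since outputs are mixed, so the relevant cases are light/heavy, heavy/heavy with both regular, heavy/heavy with at least one super heavy, and single active input).

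For the regular heavy subcase, I would first verify correct firing: a heavy input $i$ and a light input $j$ sharing an output $o$ have columns whose overlap has expectation $n\cdot(1/m'^{\,1/4})^2 = n/\sqrt{m'} = \Theta(\log m')$, where crucially the light column was defined to place its 1s only inside the support of its paired heavy column, so this $\Theta(\log m')$ overlap forces the $D_1$-row of $o$ to evaluate to $\Theta(1)$ after combining with $x_1'$. For Type (a) noise I would mirror the calculation in Claim \ref{type-a-heavy}, observing that the light-column densities are at most $1/m'^{\,1/4}$ (they sit inside heavy supports) so the same bound $O(m\log m'/(n m'^{\,1/4}))$ applies and becomes $O(1)$ using $m = O(m'^{3/4})$ (valid here because the regular-heavy restriction keeps $\bar t \geq m'^{\,1/4}$ on this subproblem). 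For Type (b) noise I would reuse equation (\ref{big-formula}) from Claim \ref{main-claim}, partition terms by dependence, and in each case plug in the relevant column density; the constrained-support structure of the light columns means that the joint probability $\Pr[C_0(b,i)=1, C_0(b,j)=1]$ for an incorrect output is no larger than in the high-influence case, so the analysis concludes identically.

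The harder subcase is super heavy mixed. Correct firing is routine: a super heavy column (density $1/\gamma$) and its paired light column (density $2\gamma/\sqrt{m'}$) overlap in expectation $2\log m'$, so the correct $D_1$-row is triggered. With at most one super heavy active, the Type (a) and Type (b) noise analyses go through essentially as in Claim \ref{type-b-heavy} and Claim \ref{type-a-heavy}, since the number of super heavy inputs is at most $O(\sqrt{m'})$ (each absorbs more than $m'^{\,1/2}$ of the $\leq 2m'$ input-slots) and each light column has density only $O(1/\sqrt{m'})$, both small enough to keep the per-entry noise $O(1)$ after dividing by $\log m'$. The main obstacle, and the reason \textbf{detect-two-active-heavies} is needed, is the catastrophic case of two simultaneously active super heavy inputs $i_1, i_2$: their columns in $C_0$ overlap in $\Theta(n/\gamma^2)$ positions, so $x_1'$ would become densely 1 in the super heavy channels and every row of $D_1$ for this subproblem would spuriously fire.

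To finish, I would describe and analyze the \textbf{detect-two-active-heavies} gate: because there are only $O(\sqrt{m'})$ super heavy inputs and they have been assigned disjoint rows in the encoding of $x_0$, the algorithm can dedicate $O(\sqrt{m'}\log m')$ parameters to compute a linear "super heavy count" on $D_0 x_0$, thresholded by a ReLU pair into a binary indicator that two super heavies are active. That indicator is then used to subtract a large bias from every output row of the super heavy subproblem before the final ReLU, zeroing all of them. Correctness of the overall algorithm is preserved because the unique true output in that situation is a heavy-heavy AND, which is produced by \textbf{High-Influence-AND} on the double-heavy subproblem in a disjoint block of rows and columns. The hard part will be showing that this gate can be wired inside the common $n \times n$ matrix structure without disturbing the other subproblems and that with high probability its threshold fires exactly when two super heavies are active; this reduces to a Chernoff bound on the sum of $O(\sqrt{m'})$ indicator contributions, analogous to those used in the earlier claims.
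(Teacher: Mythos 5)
Your overall architecture (split into regular-heavy and super-heavy mixed subcases; prove correct firing, then Type (a), then Type (b) noise; use a count-threshold-suppress gate for two active super heavies) matches the paper's proof, and your gate is essentially the paper's cutoff-row/cutoff-column mechanism. But there is a genuine gap in your Type (a) analysis for the regular heavy mixed subcase. You justify reusing the calculation of Claim \ref{type-a-heavy} by asserting that ``the regular-heavy restriction keeps $\bar t \geq m'^{\,1/4}$ on this subproblem,'' hence $m = O(m'^{\,3/4})$. That premise is false: every mixed output contains exactly one light input, light inputs may each appear in only one output, so the number of light inputs in this subproblem can be $\Theta(m')$, and the average influence can be small --- indeed the paper's stated reason for having a mixed algorithm at all is the regime of low average but high maximum influence. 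The heavy inputs do number at most $m'^{\,3/4}$ and can be handled as in Claim \ref{type-a-heavy}, but the light inputs require a different argument, and this is where the real difficulty lies: the row-$e$ entries $C_0(e,a)$ across light inputs $a$ are \emph{not} independent, because two light inputs sharing the same heavy partner are both gated by that heavy column's entry in row $e$, so the straightforward Chernoff bound you invoke does not apply. The paper handles this by first bounding, with high probability, the number of light inputs that even get to make a random choice in row $e$ by $O(m'^{\,3/4})$ (a Chernoff bound over the heavy inputs' row-$e$ entries, weighted by $|\delta(a)|/\sqrt{m'}$, which crucially uses that regular heavies satisfy $|\delta(a)| \leq \sqrt{m'}$), concluding $|C_0^{L(C_0)}(e)| = O(\sqrt{m'})$ w.h.p., and only then bounding the noise. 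This is not a technicality: the paper points out that with a super heavy partner the concentration genuinely fails (some row receives noise $\Theta(m^{3/4}/n)$ w.h.p. even though the expectation is $O(1)$), which is precisely why the $\sqrt{m'}$ split between regular and super heavy exists. Your proposal never engages with this dependence, so the central lemma of the regular-heavy subcase is unproved.

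Two smaller points. In the super-heavy subcase you assert the Type (b) bound ``goes through essentially as in'' the earlier claims; the paper's Claim \ref{type-b-super-mixed} actually needs a two-case analysis (spurious output sharing the active light input versus a different light input) with different counting in each, so some additional work is required there. And for the gate, the detection of two active super heavies is exact rather than a Chernoff-concentration statement: because the super heavy inputs get non-overlapping encodings in $x_0$, the cutoff row receives no noise, so it reads exactly the number of active super heavies and the threshold at $2$ never misfires; framing it as a high-probability event understates what the disjoint encoding buys you.
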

\begin{proof}
This follows from the two lemmas below.
\begin{lemma}
The subproblem of Algorithm {\bf Mixed-Influence-AND} on the regular heavy mixed outputs produces the correct result provided that at most 2 inputs are active and $n = O(\sqrt{m'}{\log m'})$.
\label{mixed-regular}
\end{lemma}

\begin{proof}
We first point out that for any output that should be active as a result of the AND, the entries of $x_1$ that should be 1 for that output, will in fact be a 1.  This follows from the fact that for any pair of inputs that appear in a regular heavy mixed output, the expected number of rows of overlap in their respective columns of $C_0$ is $\Theta(\log m')$, and thus we can use a Chernoff bound to show that it will be within a constant factor of that value.  The Lemma now follows from the following two claims:

\begin{claim}
\label{type-b-regular-mixed}
Any entry of $C_1'y_1$ that does not correspond to a correct 1 of the 2-AND problem has value at most $\epsilon$ due to noise of type (b) with high probability. 
\end{claim}

\begin{proof}
For any column of $C_0$ (corresponding to either a light or a heavy input), the expected number of 1 entries is $O(n/m'^{\,1/4})=O(m'^{\,1/4} \log m')$, and will be no larger with high probability.  With this in hand, we can use an argument analogous to that in the proof of Claim \ref{main-claim}.
\end{proof}

\begin{claim}
\label{type-a-regular-mixed}
The amount of type (a) noise introduced to any entry of $C_0D_0x_0$ is at most $\epsilon$ with high probability. 
\end{claim}

\begin{proof}
We need to argue this for both the light inputs and the heavy inputs.  However, since we partitioned those inputs in $D_0$, they will not interfere with each other, and we can handle each of those separately.  We first examine the heavy inputs, and note that there can be at most $m'^{\,3/4}$ of them, since each will contribute at least $m'^{\,1/4}$ distinct outputs.  Let $N_h(e)$ be the contribution to $C_0D_0x_0(e)$ of this kind of noise from heavy inputs.
We first provide an expression for E$[N_h(e)]$.  Let $H(C_0)$ be the columns of $C_0$ in row $e$ that correspond to heavy inputs, not counting the active input.  Let $H(D_0)$ be the columns of $D_0$ that are used by the heavy inputs.  In this expression, we let $a$ range over the columns in $H(C_0)$ and $b$ range over the columns of $H(D_0)$.  We see that 
\begin{multline}
\label{type-a-expectation}
E[N_h(e)] = 
\frac{1}{\log m'}\sum_{a \in H(C_0)}\sum_{b \in H(D_0)} \Pr[x_0(b) = 1]\Pr[D_0(a,b)>0]\Pr[C_0(e,a)=1],
\end{multline}
where the active input not being in $H(C_0)$ implies that the three probabilities listed are independent.  Since $|H(C_0)| \leq m'^{\,3/4}$ and $|H(D_0)| \leq n$, there are at most $nm'^{\,3/4}$ terms in this sum, and the first two probabilities are $\frac{\log m'}{n}$, and the third is $\frac{1}{m'^{\,1/4}}$.  This gives us that
$$E[N_h(e)] = O\left(\frac{1}{\log m'}nm'^{\,3/4}\left(\frac{\log m'}{n}\right)^2\frac{1}{m'^{\,1/4}}\right)=O(1).$$
Since the summation of probabilities is divided by a $\log m'$ factor, a fairly straightforward Chernoff bound over the choices of $C_0(e,a)$, for $a \in H(C_0)$, shows that this is no higher than its expectation by a constant factor with high probability.  The resulting constant can be made smaller than any $\epsilon$ by increasing $n$ by a constant factor dependent only on $\epsilon$.  

We next turn to light inputs.  This is more challenging than the heavy inputs for two reasons.  First, if we define $L(C_0)$ analogously to $H(C_0)$, then $|L(C_0)|$ can be larger than $m'^{\,3/4}$ because each light input appears in at most $m'^{\,1/4}$ outputs.  It can be $\Theta(m)$, which means we would need to evaluate the sum in the expectation a different way.  Second, the choices of $C_0(e,a)$, for $a \in L(C_0)$, are no longer independent, since those choices for two light inputs that share the same heavy input will both be influenced by the choice in row $e$ for that heavy input (see Figure \ref{dependence}).  Thus the Chernoff bound to demonstrate the high probability result needs to be done differently. In fact, this lack of independence is why we need to handle the super heavy inputs differently in the algorithm.  If, for example, there were a single heavy input $h$ that appeared in the same output as all of the light inputs, consider any row $e_h$ such that $C_0(e_h,h)=1$.  The expectation of $C_0D_0x_0(e_h)$ is $\Theta(m^{3/4}/n)$, which is too large.  In other words, with such a super heavy input, even though the expectation $E[N_l(e)] = O(1)$ for every $e$, the distribution is such that with high probability there will be some $e_h$ such that $N_l(e_h) = \Theta(m^{3/4}/n)$.

\begin{figure}
\centering
\includegraphics[width=0.4\linewidth]{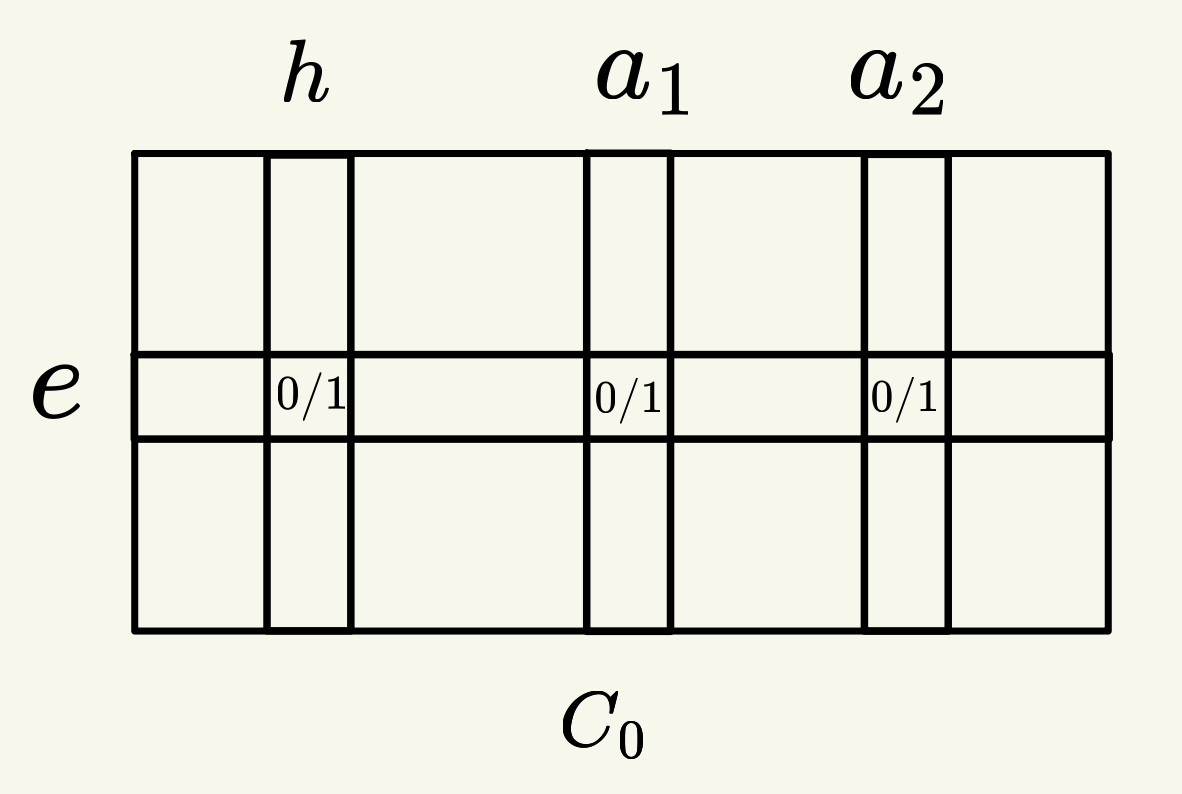}
\caption{The dependence in light inputs between the different choices for $C_0(e,a)$, when $a \in L(C_0)$.  Here $a_1$ and $a_2$ are light inputs, and so $a_1, a_2 \in L(C_0)$, and $h$ is a heavy input such that both $h \land a_1$ and $h \land a_2$ are computed.   If we ignore the impact of other heavy inputs then if $C_0(e,h) = 0$, then both $C_0(e,a_1) = 0$ and $C_0(e,a_2) = 0$.
Thus, $\Pr[C_0(e,a_2) = 1 | C_0(e,a_1) = 1] \gg \Pr[C_0(e,a_2) = 1 | C_0(e,a_1) = 0]$, and so $C_0(e,a_2)$ and $C_0(e,a_1)$ are not independent.}
\label{dependence}
\end{figure}

Instead, we take a different approach here.  For any row $e$ of $C_0$, and any set of columns $S$, let $C_0^S(e)$ be the set of entries $C_0(e,a)$ that are 1 for $a \in S$.  We first show that with high probability, $|C_0^{L(C_0)}(e)| = O(\sqrt{m'})$. To do so, we demonstrate that the entries in $C_0$ in row $e$ for the heavy inputs leave at most $O(m'^{\,3/4})$ light inputs that make a choice for their entry in row $e$; the remainder are only in rows where all heavy inputs that appear with them have a 0 in row $e$, and thus they are set to 0 without making a choice.  More precisely, for any heavy input $a$, let $\delta(a)$ be the set of light inputs that appear in an output with $a$.  We wish to show that 
\[\left|\bigcup_{a \in C_0^{H(C_0)}(e)} \delta(a)\right| = O(m'^{\,3/4}).\]
To do so, first note that $\sum_{a \in H(C_0)}\delta(a) \leq m'$, since there are at most $m'$ outputs, and each output has at most one light entry.  We can now define random variables $z_a$ for each $a \in H(C_0)$, where $z_a = 0$ when $a \notin C_0^{H(C_0)}(e)$, and $z_a = \delta(a) / \sqrt{m'}$ when $a \in C_0^{H(C_0)}(e)$, which happens with probability $\frac{1}{m'^{\,1/4}}$.  Since there are no super heavy inputs in $H(C_0)$, $\forall a \in H(C_0), |\delta(a)| \leq \sqrt{m'}$, and so $0 \leq z_a \leq 1$.  Also, the $z_a$ are mutually independent.  Thus, $E\left[\sum_{a \in H(C_0)} z_a\right] \leq m'^{\,1/4}$, and a standard Chernoff bound shows that $\sum_{a \in H(C_0)} z_i = O(m'^{\,1/4})$ with high probability.  From this it follows that
\[\left|\bigcup_{a \in C_0^{H(C_0)}(e)} \delta(a)\right| \leq 
\sum_{a \in C_0^{H(C_0)}(e)} |\delta(a)| = \sqrt{m'} \sum_{a \in H(C_0)} z_a = O(m'^{\,3/4}),\]
with high probability.  Given this, at most $O(m'^{\,3/4})$ light inputs make a choice for their entry in row $e$, and each of those is a 1 independently with probability $\frac{1}{m'^{\,1/4}}$. A standard Chernoff bound now shows that 
$|C_0^{L(C_0)}(e)| = O(\sqrt{m'})$ with high probability.

To finish the proof of this claim, let $N_l(e)$ and $L(D_0)$ be defined analogously to $N_h(e)$ and $H(D_0)$ respectively, for light inputs.  We see that 
\begin{multline*}
E[N_l(e)] = \frac{1}{\log m'}\sum_{a \in C_0^{L(C_0)}(e)}\sum_{b \in L(D_0)} \Pr[x_0(b) = 1]\Pr[D_0(a,b)>0]
= O\left(\frac{n\sqrt{m'}}{\log m'}\left(\frac{\log m'}{n}\right)^2\right) = O(1).
\end{multline*}

We can now use a Chernoff bound over the choices of the relevant entries in $D_0$ to show that $N_l(e) =O(1)$ with high probability as well.  The resulting constant can be made smaller than any $\epsilon$ by increasing $n$ by a constant factor dependent only on $\epsilon$. 
\end{proof}
This concludes the proof of Lemma \ref{mixed-regular}.
\end{proof}

\begin{lemma}
The subproblem of Algorithm {\bf Mixed-Influence-AND} on the super heavy mixed outputs produces the correct result provided that at most 2 inputs are active and $n = O(\sqrt{m'}{\log m'})$.
\end{lemma}
\begin{proof}
We first point out that for any output that should be active as a result of the AND, the entries of $x_1$ that should be 1 for that output, will in fact be a 1.  This follows from the fact that for any pair of inputs that appear in a super heavy mixed output, the expected number of rows of overlap they share in $C_0$ is $\Theta(\log m')$, and thus we can use a Chernoff bound to show that it will be within a constant factor of that value.  The Lemma now follows from the following two claims:

\begin{claim}
\label{type-a-super-mixed}
The amount of type (a) noise introduced to any entry of $C_0D_0'y_1$ is at most $\epsilon$ with high probability. 
\end{claim}

\begin{proof}
Since the super heavy inputs do not have any overlapping columns in $D_0$ with each other or with light inputs, none of the super heavy inputs will produce type (a) noise.  Note that there can be at most $\sqrt{m'}$ super heavy inputs (or we would have more than $m'$ outputs), and so $n=O(\sqrt{m'}\log m'$) is sufficient space to provide a non-overlapping input encoding in $x_0$ for each super heavy input.  (This is why we cannot treat regular heavy inputs the same as super heavy inputs - there might be too many of them.) Thus, we only need to concern ourselves with type (a) noise produced by pairs of inputs that are both light.  Demonstrating that this noise is at most $\epsilon$ is analogous to the proof that there is not too much type (a) noise for heavy inputs in Claim \ref{type-a-regular-mixed}.  In fact, the expected such noise is given by an expression almost identical to Equation \ref{type-a-expectation}.  In evaluating that expression, we only need to change the number of choices of $H(C_0)$ from $m'^{\,3/4}$ to $m$, and the $\Pr[C_0(e,a) = 1]$ from $\frac{1}{m'^{\,1/4}}$ to $\frac{2\gamma}{m'^{\,1/2}}$.  The facts that the expectation of this noise is $O(1)$, that it is not much higher with high probability, and that it can be made smaller than $\epsilon$ by increasing $n$ by a constant all follow the same way as in the proof of Claim \ref{type-a-regular-mixed}.
\end{proof} 

\begin{claim}
\label{type-b-super-mixed}
When the two active inputs to the 2-AND problem consist of at most one super heavy input, then any entry of $C_1'y_1$ that does not correspond to a correct 1 of the 2-AND problem has value at most $\epsilon$ due to noise of type (b) with high probability.
\end{claim}

\begin{proof}
Type (b) noise occurs when the 1s that appear in $ReLU(C_0D_0x_0)$ are picked up by non-zero entries in unintended rows during the multiplication by $D_1$, and then remain after being subsequently multiplied by $C_1'$.  Since we assume there is at most 1 super heavy input, we only need to handle two cases: one active light input and one active super heavy input, as well as two active light inputs.  For two light inputs, the number of 1s in $ReLU(C_0D_0x_0)$ is $O(1)$ with high probability.  For the mixed case, the number of 1s in $ReLU(C_0D_0x_0)$ is $O(\log m')$ with high probability, and thus is more challenging, and in fact the two active light inputs case can be handled similarly, so we here only present the argument for the mixed case.  

Let $s$ be the active super heavy input, and let $l$ be the active light input.  The 1s in $ReLU(C_0D_0x_0)$ from those two active inputs can be picked up by an unintended row of $D_1$ that corresponds to an incorrect output that combines a light input $l'$ and a heavy input $s'$, where either $s' \neq s$ or $l' \neq l$, or both.  We will combine these three possibilities into two cases: in the first, $s' \neq s$, but $l' = l$, and in the second, $l' \neq l$, but $s'$ may or may not be the same as $s$.  

In the first case, the number of entries of $D_1$ in the row for any given output that overlap with 1s, for each $s'$, can be at most $O(\log m')$, and since $l$ is light, there can be at most $m'^{\,1/4}$ such $s'$.  Thus, this way only contributes at most $O(m'^{\,1/4} \log m')$ nonzero entries to $y_1 = D_1 ReLU(C_0D_0x_0)$.  Furthermore, each of these entries has size at most $2/\gamma$ with high probability.  The type (b) noise of any entry $e$ of $C_1'y_1$ will consist of the sum of each of those entries multiplied by either 0 or a 1, with the probability of a 1 being $\log m' / n$.  Thus, from a union bound the probability that this sum is non-zero is at most $O(m'^{\,1/4} \log^2 m' / n) = O(\log m' / m'^{\,1/4})$.  Furthermore, with high probability that sum will have at most $O(1)$ non-zero entries, and thus the type (b) noise when we hold $l$ fixed is at most $O(1)$, and that constant can be made smaller than any $\epsilon$ by increasing the constant $\gamma$.

We next turn to the second case: noise of type (b) that combines $s'$ with $l'$, where $l' \neq l$.  In this case, the number of entries of $D_1$ in the row for any given output that overlap with 1s will be $O(1)$ with high probability, and thus any non-zero entry of $D_1$ has value $O(\frac{1}{\log m'})$ with high probability.  There are at most $m'$ rows of $D_1$ that could have such overlap, and the probability of overlap for each of them is $O\left(n(\frac{1}{\sqrt{m'}})^2\right) = O\left(\frac{ \log m'}{\sqrt{m'}}\right)$.  Thus the resulting expected number of entries of $y_1$ that are non-zero is $O(\sqrt{m'}\log m')$.  Using the fact that for any pair of rows of $D_1$ that involve two different light inputs, the entries in those rows will be independent, and the fact that every light input can appear in at most $m^{\,1/4}$ rows, we can use a Chernoff bound to show that it will not be higher by more than a constant factor.  

Again, any entry $e$ of $C_1'y_1$ will consist of the sum of each of those entries multiplied by either 0 or a 1, with the probability of a 1 being $\log m' / n$.  The expected number of non-zero terms in that sum will be $O\left(\frac{\log^2 m'}{n \sqrt{m'}}\right) = O(\log m')$, and can be shown with a Chernoff bound to be within a constant factor of its expectation with high probability.  Finally, since each of these terms is $O(\frac{1}{\log m'})$ with high probability, we see that this contribution to any entry of $C_1'y_1$ is at most $O(1)$.  This can be made smaller than any $\epsilon$ by increasing $n$ by a constant factor.
\end{proof}

Claim \ref{type-b-super-mixed} assumes that no two super heavy inputs are active.  However, as described thus far, if two super heavy inputs were to be active, than a constant fraction of the entries in $ReLU(C_0D_0x_0)$ would be 1s, and this would wreak havoc with the entries in $C_1'y_1$.  Fortunately, we do not need to handle the case of two active super heavy inputs here: if an output has two super heavy inputs, it will be handled by algorithm {\bf High-Influence-AND}.  However,
we still have to ensure that when there are two active super heavy inputs, all of the mixed outputs return a 0.   Specifically, when there are two active super heavy inputs, there is so much noise of type (b) that if we do not remove that noise, many mixed outputs would actually return a 1.  The mechanism  {\bf detect-two-active-heavies} is how we remove that noise.

To construct this mechanism, we add a single row to the matrix $C_0$, called the cutoff row.  Every column of $C_0$ corresponding to a super heavy input will have a 1 in the cutoff row, and all other columns will have a 0 there.  There will be a corresponding bias of $-1$ that lines up with this entry, and so the cutoff row will propagate a value of $1$ if there are two or more super heavy inputs and $0$ otherwise.  $D_1$ will have a cutoff column which lines up with the cutoff row in $C_0$, and that column will have a value of $-Z$ in every row, where $Z$ is large enough to guarantee that all entries of $y_1$ will be negative.  Thus, all entries of $C_1'y_1$ will be non-positive, and will be set to 0 by the subsequent $ReLU$ operation.  Note that since we are using non-overlapping entries of $x_0$ to represent the super heavy inputs, there will not be any noise added to the cutoff row, and so this mechanism will not be triggered even partially when less than two super heavy inputs are active.

We point out that this operation is very reliant on there being at most two active inputs, and so the algorithm as described thus far does not work if three or more inputs are active (for example, two super heavy inputs and one light input would only return zeros for the mixed outputs).  However, we describe below how to convert any algorithm for two active inputs into an algorithm that can handle more than two active inputs.
\end{proof}
This concludes the proof of Theorem \ref{mixed-theorem}.
\end{proof}

\section{Generalizing the constructions}
\label{extensions}

We here demonstrate how the above algorithm can be made efficient in terms of the number of bits required to represent the parameters, and also how it can be extended to more general settings, adding the ability to structurally handle more than two active inputs, handle multiple layers, and handle the $k$-AND function.

\subsection{Bit complexity of parameters}
\label{constant_bits}

We describe how to ensure the algorithm can be constructed using an average of $O(1)$ bits per parameter while maintaining computational correctness.  First, observe that the compression matrices ($C$,$C_0$, and $C'_1$ across all of the cases we consider) contain only binary entries, which allows them to be represented efficiently. The decompression matrices ($D$, $D_0$, and $D_1$), however, contain values between 0 and 1 determined by a normalizing term from the corresponding columns of the $C$ matrices.  But, since the algorithm thresholds its final results, exact normalization values are unnecessary.  In all cases, the required normalization term is $\Theta(1/\log m')$ w.h.p., and can be approximated with a single representative value $\nu = c/\log m'$ for a suitably chosen constant $c$.

This approximation ensures that the representation of $C$ and $D$ only requires us to use an encoding of three different values for each entry: 0, 1, and $\nu$.  However, the final protocol is obtained by multiplying these matrices together to obtain the $n \times n$ matrices of the form $CD$, and so we need to understand the entries in these product matrices. An analysis (not included here) of all of the different matrices utilized shows that each entry in any $CD$ can be modeled as a random variable $\nu \rho$, where $\rho$ is the sum of $r$ independent Bernoulli$(1/r)$ random variables, where $r$ ranges in the different protocols between $\sqrt{m}$ and $m'$ (with a minor modification for the algorithm {\bf Mixed-Influence-AND}, which involves two such sums).  

To encode these entries efficiently, we only need to encode the value of $\rho$, and we do so with essentially a unary code: $\rho$ is represented by a string of $\rho-1$ "1" symbols followed by a "0" symbol.   The probability distribution of the value of $\rho$ is such that the expected number of bits this requires is $O(1)$: we are effectively using a Huffman code on a set of events $\rho_1, \ldots, \rho_r$, where $\Pr[\rho_{i+1}] \leq \Pr[\rho_i]/2$.  Thus, each entry of any $CD$ matrix can be represented using an expected $O(1)$ bits, confirming the desired average $O(1)$-bit complexity per parameter. 

We point out that this does require the model to "unpack" these representations at inference time.  If we require the parameters to be standard real number representations, then we instead use the fact that for all matrix entries, $\rho = O(\log m')$ w.h.p. (and if we are unlucky with our random choices and it is larger, we can recreate the algorithm from scratch until we achieve this result). As a result, all values in all matrices will be $O(1)$, and it is sufficient for us to represent them with a precision of $O(1/\log m')$.  Thus, for this more stringent requirement on representation, $O(\log \log m')$ bits per parameter is sufficient.

\subsection{More than two active inputs}
\label{moreactive}

We have assumed throughout that at most 2 inputs are active at any time.  It turns out that most of the pieces of our main algorithm work for any constant number of inputs being active, but one significant exception to that is the {\bf detect-two-active-heavies} mechanism of Algorithm {\bf Mixed-Influence-AND}, which requires at most 2 active inputs in order to work.  Thus, we here describe a way to handle any number of active inputs, albeit at the cost of an increase in $n$.  Let $v$ be an upper bound on the number of active inputs.

We start with the case where $v = 3$, where there are three possible pairs of active inputs to a 2-AND. The idea will be to create enough copies of the problem so that for each of the three possible pairs of active inputs, there is a copy in which the pair appears without the third input active. To handle that, we make $O(\log m)$ copies of the problem (and thus increase $n$ by that factor).  These copies are partitioned into pairs, and each input goes into exactly one of the copies in each pairing.  The choice of copy for each input is i.i.d. with probability 1/2.  Each of the copies are now computed, using our main algorithm, except that only the outputs that have both of their inputs in a copy are computed, and we have an additional mechanism, similar to {\bf detect-two-active-heavies}, that detects if a copy has 3 active inputs, and if so, it zeroes out all active outputs in that copy.  Finally, we combine all the copies of each output that are computed, summing them up and then cutting off the result at 1.  The number of copies is chosen so that for every set of three inputs, with high probability there will be a copy where each of the three possible pairs of inputs in that set of three inputs appears without the third input.  Thus, for any set of three active inputs, each pair will be computed correctly in some copy, and so with high probability this provides us with the correct answer.

We can extend this to any bound $v$ on the number of active inputs.  We still partition the copies into pairs, and we need any set of $v$ inputs to have one copy where each set of two inputs appears separately from the other $v-2$ inputs.  The probability for this to happen for a given set of $v$ inputs and a pair within that set is $1/2^{v-1}$.  The number of choices of such sets is ${m \choose v}{v \choose 2}$.  Thus, to get all of the pairings we need to occur, the number of copies we need to make is $O\left(2^{v-1} \log \left[{m \choose v}{v \choose 2}\right]\right) \leq O(v2^v\log m)$.  As a result, we can still compute in superposition up to when $v = O(\log m')$.  We note that some care needs to be taken with the initial distribution of copies of each input to ensure that process does not create too much (type (a)) noise, but given how quickly $n$ grows with $v$ due to the number of copies required, this is not difficult.

\subsection{Multiple layers}

These algorithms can be used to compute an unlimited number of layers because, as discussed above, as long as the output of a layer is close to the actual result (intended 1s are at least 3/4 and intended 0s are at most 1/4), we can use thresholding to make them exact Boolean outputs.  Therefore, the noise introduced during the processing of a layer is removed between layers, and so does not add up to become a constraint on depth.  Also, as discussed, the high probability results all are with respect to whether or not the algorithm for a given layer works correctly.  Therefore, each layer can be checked for correctness, and redone if there is an error, which ensures that all outputs are computed correctly for every layer.  Thus, there is no error that builds up from layer to layer.

\subsection{Computing $k$-AND}
\label{kway}

We next turn our attention to the question of \( k \)-AND.  To do so, we simply utilize our ability to handle multiple layers of computation to convert a \( k \)-way AND function to a series of pairwise AND functions. Specifically, to compute each individual \( k \)-AND, we build a binary tree with \( k \) leaves where each node of the tree is a pairwise AND of two variables. These then get mapped to a binary tree of vector 2-AND functions where each individual pairwise AND is computed in exactly one vector 2-AND function, to compute the entire vector \( k \)-AND function. We thus end up with \( 2k \) 2-AND functions to compute, which we do with an additional factor of \( \log k \) in the depth of the network.

As described so far, this will increase the number of neurons required by the network by a factor of \( O(k) \).   However, for $k$-AND to be interesting, there would need to be the possibility of at least $k$ inputs being active $(v \geq k)$, and so for any interesting case of $k$-AND, we would be using our technique for more than 2 active inputs described above, and so if we want to compute in superposition, we have the limitation that $k = O(\log m')$.  However, since that technique already ensures that every pair of the $k$-AND appears by itself in one of the copies of the network, we can embed the leaves of our binary tree into those copies.  We would then do the same thing with the next level of the tree, and so on until we get to the root of the tree.  Since each level of the tree has half the number of outputs as the previous level, we get a telescoping sum, and thus $k$-AND can be added to our implementation of at least $k$ active inputs without changing the asymptotics of $n$.  It does however add an additional factor of \( \log k \) to the depth of the network.

\end{appendix}

\end{document}